\title{Compositionality of planar perfect matchings\\ \normalsize A universal and complete fragment of ZW-calculus}
\author{Titouan Carette}
\affil{\href{mailto:titouan.carette@lu.lv}{titouan.carette@lu.lv}\\
  Centre for Quantum Computer Science, Faculty of Computing, University of Latvia, Raina 19, Riga, Latvia, LV-1586}
\author{Etienne Moutot}
\affil{\href{mailto:etienne.moutot@math.cnrs.fr}{etienne.moutot@math.cnrs.fr}\\
CNRS, I2M, Aix-Marseille Université, Marseille, France}
\author{Thomas Perez}
\affil{\href{malto:thomas.perez@ens-lyon.fr}{thomas.perez@ens-lyon.fr}\\
Université de Lyon, ENS de Lyon, 69007 Lyon, France}
\author{Renaud Vilmart}
\affil{\href{mailto:vilmart@lsv.fr}{vilmart@lsv.fr}
Université Paris-Saclay, ENS Paris-Saclay, Inria, CNRS, LMF, 91190, Gif-sur-Yvette, France}
\date{}
\tikzstyle{every picture}=[baseline=-0.25em]
\tikzset{every path/.style={draw=black!80, line width=0.6pt}}
\tikzstyle{none}=[inner sep=0mm]
\tikzstyle{black dot}=[inner sep=0.5mm,minimum width=0pt,minimum height=0pt,fill=black,draw=black,shape=circle]
\tikzstyle{dot}=[black dot]
\tikzstyle{white dot}=[inner sep=0.5mm,minimum width=0pt,minimum height=0pt,fill=white,draw=black,shape=circle]
\tikzstyle{white phase dot}=[minimum size=4mm, font={\footnotesize\boldmath}, shape=rectangle, rounded corners=1.5mm, inner sep=0.8mm, outer sep=-2mm, scale=0.8, draw=black, fill=white]
\tikzstyle{H box}=[fill=yellow, draw=black, shape=rectangle, inner sep=0.6mm, minimum height=1.5mm, minimum width=1.5mm]
\tikzstyle{box}=[rectangle,fill=white,draw=black, font=\scriptsize, inner sep=2pt]
\tikzstyle{box-no-outline}=[rectangle, inner sep=2pt]
\tikzstyle{fswap}=[inner sep=0.7mm,minimum width=0pt,minimum height=0pt,draw=purple,shape=circle]
\tikzset{
	tickedge/.style={
		decoration={ markings,
			mark=at position .5 with {\draw (0,2pt) -- (0,-2pt);}
		},
		postaction={decorate}
	},
}
\tikzstyle{compact dash}=[dash pattern={on 2pt off 1pt}]
\tikzstyle{dashed arrow}=[->, draw=black, dashed]
\tikzstyle{hook arrow}=[right hook->]
\tikzstyle{arrow}=[->]
\tikzstyle{every loop}=[]
\newcommand{\tikzfig}[2][]{
	{\setlength{\fboxsep}{0pt}\colorbox{gray!15}{~#1\strut\input{./figures/#2.tikz}#1}}
}
\def\fig{}
	\renewcommand{\tikzfig}[1]{
		\tikzsetnextfilename{#1}
		
	{\setlength{\fboxsep}{0pt}\colorbox{gray!15}{~\strut\input{./figures/#1.tikz}}}
}
	\newcommand{{\setlength{\fboxsep}{0pt}\colorbox{gray!15}{~\strut\input{./figures/\fig/\fig_.tikz}}}}[2][]{
		\tikzsetnextfilename{\fig_#2}
		{\setlength{\fboxsep}{0pt}\colorbox{gray!15}{~#1\strut\input{./figures/\fig/\fig_#2.tikz}#1}}}
	\newcommand{{\setlength{\fboxsep}{0pt}\colorbox{gray!15}{~\strut\input{./figures/\fig/\fig_.tikz}}}}[2][]{{\setlength{\fboxsep}{0pt}\colorbox{gray!15}{~#1\strut\input{./figures/\fig/\fig_#2.tikz}#1}}}
\newcommand{\eq}[2][]{
	#1
	\underset{\substack{#2}}{=}
	#1
}
\newcommand{\interp}[1]{\left\llbracket #1 \right\rrbracket}
\newcommand{\cat}[1]{\mathbf{#1}}
\newcommand{\pW}{\operatorname{p}\!\!\cat{W}}
\newcommand{\bvdots}{ \tikz[baseline, every node/.style={inner sep=0}]{ \node at (0,0){.}; \node at (0,-3pt){.}; \node at (0,3pt){.}; } }
\renewcommand{\todo}[2][]{\tikzexternaldisable\@todo[#1]{#2}\tikzexternalenable}
\newcommand{\N}{\ensuremath{\mathbb{N}}}
\newcommand{\C}{\ensuremath{\mathbb{C}}}
\theoremstyle{plain}
\newtheorem{theorem}{Theorem}[section]
\newtheorem{lemma}[theorem]{Lemma}
\newtheorem{proposition}[theorem]{Proposition}
\crefname{propproposition}{Proposition}{Propositions}
\theoremstyle{definition}
\newtheorem{definition}{Definition}[section]
\theoremstyle{remark}
\newtheorem{example}[theorem]{Example}
\begin{document}
	
	\maketitle

\begin{abstract}
	
	We exhibit a strong connection between the matchgate formalism introduced by Valiant and the ZW-calculus of Coecke and Kissinger. This connection provides a natural compositional framework for matchgate theory as well as a direct combinatorial interpretation of the diagrams of ZW-calculus through the perfect matchings of their underlying graphs.
	
	We identify a precise fragment of ZW-calculus, the planar W-calculus, that we prove to be complete and universal for matchgates, that are linear maps satisfying the matchgate identities. Computing scalars of the planar W-calculus corresponds to counting perfect matchings of planar graphs, and so can be carried in polynomial time using the FKT algorithm, making the planar W-calculus an efficiently simulable fragment of the ZW-calculus, in a similar way that the Clifford fragment is for ZX-calculus. This work opens new directions for the investigation of the combinatorial properties of ZW-calculus as well as the study of perfect matching counting through compositional diagrammatical technics.
	
\end{abstract}


\section{Introduction}

A quantum computation mapping $n$ qubits to $m$ qubits corresponds to an isometric linear map $\mathbb{C}^{2^n} \to \mathbb{C}^{2^m}$. Due to the exponential size of their matrix representation, those linear maps are traditionally depicted as quantum circuits, an assemblage of elementary quantum gates similar to the more common boolean circuits. Given a quantum circuit $n\to m$, evaluating a coefficient of the corresponding $2^m \times 2^n $ matrix (i.e. evaluating the circuit with a given input) typically requires an exponential time. However, there are some specific classes of quantum circuits -- or fragments --, that can be classically simulated in polynomial time. Examples are the Clifford fragment (as asserted by the Gottesman-Knill theorem) as well as the fragment that will particularly interest us in this paper, the nearest-neighbour matchgates \cite{DBLP:journals/siamcomp/Valiant02}. Investigating those tractable fragments allows a better understanding of the computational advantage of quantum computing. The reference for all elementary results on quantum circuits is \cite{nielsen2002quantum}.  

Taking the diagrammatical circuit representation seriously led to developing graphical languages for quantum computing \cite{DBLP:conf/icalp/CoeckeD08}. Those languages are equational theories described by elementary gates and local identities between diagrams. Such languages come with an interpretation into linear maps. A language is said universal for a class of linear maps if any linear map in the class is the interpretation of a diagram in the language.
A language is said complete if two diagrams with the same interpretation are equivalent up to the equational theory, which means that they can be rewritten from one to the other using the local rewriting rules of the equational theory. In general, completeness is the most challenging property to prove.

The first quantum graphical language to appear was the ZX-calculus in 2008 \cite{DBLP:conf/icalp/CoeckeD08}. It was rapidly known to be universal for all linear maps. However, providing a complete set of rewriting rules took another ten years (see \cite{van2020zx} for an history of completeness) and first required a translation through another language, the ZW-calculus \cite{ng2017universal}.

The ZW-calculus was introduced in \cite{DBLP:conf/icalp/CoeckeK10} as a graphical representation of the two kinds of tripartite entanglement for qubits, namely the GHZ-states and W-states.
It then appeared that this calculus had very nice algebraic properties allowing the internal encoding of arithmetical operations.
Those properties allowed the ZW-calculus to be the first proven universal and complete language for linear maps \cite{DBLP:conf/lics/Hadzihasanovic15}. Despite this historical importance, the ZW-calculus gathered less attention than other languages, seen as more connected to quantum computing. Still, we must mention interesting connections with fermionic quantum computing \cite{DBLP:journals/lmcs/FeliceHN19}, and recent works importing some ZW-calculus primitives into ZX-calculus to exploit their algebraic properties \cite{shaikh2022sum,wang2022differentiating}. In this paper, we show that ZW-calculus has very strong connections with a specific family of quantum circuits: the matchgates.

Matchgates were introduced in 2002 by Valiant \cite{DBLP:journals/siamcomp/Valiant02}. They are linear maps defined by counting the perfect matching of a graph from which we remove some vertices depending on the inputs. This underlying combinatorial structure allows to classically simulate the corresponding quantum circuits by using the Polynomial FKT algorithm for perfect matchings counting \cite{doi:10.1080/14786436108243366,kasteleyn1961statistics}. The theory of matchgates was then developed further to the concept of holographic algorithms \cite{valiant2008holographic}. We can notice that if some connections between graphical languages and holographic algorithms have been investigated \cite{backens2017new}, we are not aware of any diagrammatical approach to the original concept of matchgate before the present work, except a mention in \cite{DBLP:journals/lmcs/FeliceHN19}.

The main contribution of this paper is the introduction of a fragment of the ZW-calculus, that we call planar W-calculus. We show that this language is universal and complete for the planar matchgate fragment of quantum computation.
The completeness proof relies on designing a normal form and a rewriting strategy to reach it. We also define a pro of matchgate computations by showing the compositionality of the matchgate identities introduced in \cite{DBLP:journals/mst/CaiCL09}. The combinatorial characterisation of matchgate computations then directly follows from the correspondence with the graphical language. Hence one can see this paper as a reformulation of matchgate theory in a compositional framework.

The paper is structured as follows. \Cref{sec:2} introduces our graphical primitives, their interpretation as linear maps and their combinatorial properties: the interpretation of a diagram can be deduced by counting the number of perfect matching of the underlying weighted graph. We present the generators and elementary rewrite rules of the language as well as an essential syntactic sugar: the fermionic swap that emulates the swap gate, which is not part of our language.
\Cref{sec:3} introduces the normal form and proves the completeness of the language.
In \Cref{sec:4}, we properly define a pro of matchgates characterised as the linear maps satisfying the matchgate identities. We show that our language is universal for matchgates, \textbf{i.e.}, that the interpretation of a diagram is always a matchgate and that all matchgates correspond to a diagram.
Finally, in \Cref{sec:5}, we sketch future directions of research suggested by the connection we identified between ZW-calculus and perfect matching counting.

\section{Perfect Matchings and Planar W-Calculus}\label{sec:2}
We define our fragment of the ZW-calculus, the \emph{planar W-calculus}, by defining its diagrams. Any diagram with $n$ inputs and $m$ outputs $D:n \rightarrow m$ is interpreted as a linear map $\interp{D}: \C^{2^n}\rightarrow \C^{2^m}$ inductively as follows:
\begin{center}
  $\interp{\def\fig{def}{\setlength{\fboxsep}{0pt}\colorbox{gray!15}{~\strut\input{./figures/\fig/\fig_tensor.tikz}}}} := \interp{D_1} \otimes\interp{D_2}$ \quad
  $\interp{\def\fig{def}{\setlength{\fboxsep}{0pt}\colorbox{gray!15}{~\strut\input{./figures/\fig/\fig_compose.tikz}}}} := \interp{D_2} \circ\interp{D_1}$ \\
  $\interp{\def\fig{rewrite-zero}{\setlength{\fboxsep}{0pt}\colorbox{gray!15}{~\strut\begin{tikzpicture}[scale=0.6]
	\begin{pgfonlayer}{nodelayer}
		\node [style=none] (5) at (-2.125, -1) {};
		\node [style=none] (6) at (-1.125, -1) {};
		\node [style=none] (7) at (-1.125, 0) {};
		\node [style=none] (8) at (-2.125, 0) {};
		\node [style=none] (9) at (-1.625, -0.5) {$f$};
		\node [style=none] (21) at (-1.625, -1.25) {$...$};
		\node [style=none] (22) at (-1.875, -1) {};
		\node [style=none] (23) at (-1.375, -1) {};
		\node [style=none] (24) at (-1.875, -1.5) {};
		\node [style=none] (25) at (-1.375, -1.5) {};
		\node [style=none] (27) at (-1.875, 0) {};
		\node [style=none] (28) at (-1.375, 0) {};
		\node [style=none] (32) at (1.625, 0) {};
		\node [style=none] (33) at (2.125, 0) {};
		\node [style=none] (34) at (1.625, -1.5) {};
		\node [style=none] (35) at (2.125, -1.5) {};
		\node [style=none] (36) at (1.875, -0.5) {$...$};
		\node [style=none] (37) at (0.125, 1) {};
		\node [style=none] (38) at (0.125, 1.5) {};
		\node [style=none] (39) at (0.125, 1.25) {$\bvdots$};
		\node [style=none] (40) at (-0.875, -1) {};
		\node [style=none] (41) at (0.125, -1) {};
		\node [style=none] (42) at (0.125, 0) {};
		\node [style=none] (43) at (-0.875, 0) {};
		\node [style=none] (44) at (-0.375, -0.5) {$g$};
		\node [style=none] (45) at (-0.375, -1.25) {$...$};
		\node [style=none] (46) at (-0.625, -1) {};
		\node [style=none] (47) at (-0.125, -1) {};
		\node [style=none] (48) at (-0.625, -1.5) {};
		\node [style=none] (49) at (-0.125, -1.5) {};
		\node [style=none] (51) at (-0.625, 0) {};
		\node [style=none] (52) at (-0.125, 0) {};
		\node [style=none] (53) at (-0.625, 0) {};
		\node [style=none] (54) at (-0.125, 0) {};
		\node [style=none] (56) at (0.375, 0) {};
		\node [style=none] (57) at (0.875, 0) {};
		\node [style=none] (58) at (0.375, -1.5) {};
		\node [style=none] (59) at (0.875, -1.5) {};
		\node [style=none] (60) at (0.625, -0.5) {$...$};
		\node [style=none] (61) at (0.125, 0.25) {};
		\node [style=none] (62) at (0.125, 0.75) {};
		\node [style=none] (63) at (0.125, 0.5) {$\bvdots$};
	\end{pgfonlayer}
	\begin{pgfonlayer}{edgelayer}
		\draw [fill=white] (6.center)
			 to (7.center)
			 to (8.center)
			 to (5.center)
			 to cycle;
		\draw (24.center) to (22.center);
		\draw (25.center) to (23.center);
		\draw [in=180, out=90] (27.center) to (38.center);
		\draw [in=360, out=90] (32.center) to (37.center);
		\draw (34.center) to (32.center);
		\draw (35.center) to (33.center);
		\draw [in=-270, out=180] (37.center) to (28.center);
		\draw [in=90, out=0] (38.center) to (33.center);
		\draw [fill=white] (40.center)
			 to (41.center)
			 to (42.center)
			 to (43.center)
			 to cycle;
		\draw (48.center) to (46.center);
		\draw (49.center) to (47.center);
		\draw [bend left=45] (51.center) to (62.center);
		\draw (53.center) to (51.center);
		\draw (54.center) to (52.center);
		\draw (58.center) to (56.center);
		\draw (59.center) to (57.center);
		\draw [bend right=45] (61.center) to (52.center);
		\draw [bend left=45] (61.center) to (56.center);
		\draw [bend left=45] (62.center) to (57.center);
	\end{pgfonlayer}
\end{tikzpicture}
}}} := (1)$ \quad
  $\interp{\def\fig{def}{\setlength{\fboxsep}{0pt}\colorbox{gray!15}{~\strut\input{./figures/\fig/\fig_identity.tikz}}}} :=
  \begin{pmatrix}
    1 & 0 \\
    0 & 1
  \end{pmatrix}
  $ \quad 
  $\interp{\def\fig{def}{\setlength{\fboxsep}{0pt}\colorbox{gray!15}{~\strut\input{./figures/\fig/\fig_cup.tikz}}}} :=
  \begin{pmatrix}
    1 & 0 & 0 & 1
  \end{pmatrix}
  $ \quad
  $\interp{\def\fig{def}{\setlength{\fboxsep}{0pt}\colorbox{gray!15}{~\strut\input{./figures/\fig/\fig_cap.tikz}}}} :=
  \begin{pmatrix}
    1 \\ 0 \\ 0 \\ 1
  \end{pmatrix}
  $
\end{center}
In particular, note that we do \emph{not} use the usual swap diagram $\def\fig{def}{\setlength{\fboxsep}{0pt}\colorbox{gray!15}{~\strut\input{./figures/\fig/\fig_swap.tikz}}}$, hence the name \emph{planar}. We do have, however, the so-called cup $\def\fig{def}{\setlength{\fboxsep}{0pt}\colorbox{gray!15}{~\strut\input{./figures/\fig/\fig_cup.tikz}}}$ and cap $\def\fig{def}{\setlength{\fboxsep}{0pt}\colorbox{gray!15}{~\strut\input{./figures/\fig/\fig_cap.tikz}}}$ satisfying the ``snake equations'':
\def\fig{snake}
\begin{align*}
{\setlength{\fboxsep}{0pt}\colorbox{gray!15}{~\strut\begin{tikzpicture}[scale=0.6]
	\begin{pgfonlayer}{nodelayer}
		\node [style=none] (0) at (-2.25, -1) {};
		\node [style=none] (1) at (2.25, -1) {};
		\node [style=none] (2) at (2.25, 0) {};
		\node [style=none] (3) at (-2.25, 0) {};
		\node [style=none] (4) at (0, -0.5) {$[f\otimes g]$};
		\node [style=none] (10) at (-1.75, -1.25) {$...$};
		\node [style=none] (11) at (-0.5, -1.25) {$...$};
		\node [style=none] (13) at (-2, -1) {};
		\node [style=none] (14) at (-1.5, -1) {};
		\node [style=none] (15) at (-2, -1.5) {};
		\node [style=none] (16) at (-1.5, -1.5) {};
		\node [style=none] (17) at (-0.75, -1.5) {};
		\node [style=none] (18) at (-0.75, -1) {};
		\node [style=none] (19) at (-0.25, -1.5) {};
		\node [style=none] (20) at (-0.25, -1) {};
		\node [style=none] (64) at (0.5, -1.25) {$...$};
		\node [style=none] (65) at (0.25, -1.5) {};
		\node [style=none] (66) at (0.25, -1) {};
		\node [style=none] (67) at (0.75, -1.5) {};
		\node [style=none] (68) at (0.75, -1) {};
		\node [style=none] (69) at (1.75, -1.25) {$...$};
		\node [style=none] (70) at (1.5, -1.5) {};
		\node [style=none] (71) at (1.5, -1) {};
		\node [style=none] (72) at (2, -1.5) {};
		\node [style=none] (73) at (2, -1) {};
		\node [style=none] (74) at (-0.5, 1.5) {};
	\end{pgfonlayer}
	\begin{pgfonlayer}{edgelayer}
		\draw [fill=white] (3.center)
			 to (0.center)
			 to (1.center)
			 to (2.center)
			 to cycle;
		\draw (15.center) to (13.center);
		\draw (16.center) to (14.center);
		\draw (17.center) to (18.center);
		\draw (19.center) to (20.center);
		\draw (65.center) to (66.center);
		\draw (67.center) to (68.center);
		\draw (70.center) to (71.center);
		\draw (72.center) to (73.center);
	\end{pgfonlayer}
\end{tikzpicture}
}}
\eq{}{\setlength{\fboxsep}{0pt}\colorbox{gray!15}{~\strut}}
\eq{}{\setlength{\fboxsep}{0pt}\colorbox{gray!15}{~\strut\input{./figures/\fig/\fig_02.tikz}}}
\end{align*}
In the following, with $D:n\to n$, we may use the following notation: $D^{\otimes\vec b}$ when $\vec b$ is a bitstring, to represent $D^{b_1}\otimes...\otimes D^{b_n}$ with $D^0 = id_n$ and $D^1 = D$. We call a diagram $D$ a scalar if it has no input and no output, i.e.~$D:0\to0$. In the category-theoretic terminology, such a collection of diagrams defines a pro, a strict monoidal category whose monoid of objects is generated by a unique element, and not a prop, which requires the category to be symmetric, \textit{i.e.} to have swap diagrams. Furthermore, the presence of the cups and caps make the category a compact-closed pro. 
We define $\cat{Qubit}$ to be the prop whose $n\to m$ morphisms are linear maps $\C^{2^n}\to \C^{2^m}$. Hence $\interp{\cdot}:\pW\to \cat{Qubit}$ is a pro morphism.

We add the two following generators: the black spider and the binary white spider, whose interpretations are detailed in the next sub-sections.

\subsection{Black Spider}
To manipulate binary words $\alpha\in \{0,1\}^n $ and $\beta\in \{0,1\}^m $, we will denote $\alpha \oplus \beta \in \{0,1\}^n $ the bitwise XOR (if $n=m$), $\alpha\cdot \beta \in \{0,1\}^{n+m}$ the concatenation, $|\alpha|\in \{0,...,n\}$ the Hamming weight, \textit{i.e.}, the number of ones in the word $\alpha$, and $|\alpha|_2 \in \{0,1\}$ the parity of this weight, $0$ if even and $1$ if odd. 
The \emph{black spider} (or black node) is given by the following interpretation:
\begin{center}
  $\displaystyle
  \interp{\def\fig{def}{\setlength{\fboxsep}{0pt}\colorbox{gray!15}{~\strut\input{./figures/\fig/\fig_black.tikz}}}} := \sum_{\substack{u\in\{0,1\}^m\\ v\in\{0,1\}^n\\ |uv| = 1}} \ketbra{u}{v}
  $
\end{center}
In other words, the black spiders gives an output $1$ if and only if exactly one of its legs (either input or outputs) has value $\ket{1}$ and all the others $\ket{0}$.
As inputs and outputs behave exactly the same, one can use cup and caps in order to transform inputs into outputs and vice-versa:
\begin{center}
  $\def\fig{def}{\setlength{\fboxsep}{0pt}\colorbox{gray!15}{~\strut\input{./figures/\fig/\fig_in-out-1.tikz}}} = {\setlength{\fboxsep}{0pt}\colorbox{gray!15}{~\strut\input{./figures/\fig/\fig_in-out-2.tikz}}}$
\end{center}
Moreover, as input order do not matter, one can bend the wires and move black spiders around, without altering the resulting linear map, we say that the black nodes are flexsymmetric \cite{DBLP:phd/hal/Carette21}. 
Flexsymmetry of the black spider allows us to see diagrams as graphs with fixed inputs and outputs edges. Fixing the input and outputs edges, any graph isomorphism preserves the semantics.

With this graphical interpretation in mind, one can understand the interpretation of a scalar diagram, composed of only black spiders, as counting the number of perfect matchings in the underlying graph.
To see this, one can use the interpretation of a single edge, which simply is the identity $\ketbra0+\ketbra1$. 
This interpretation gives a useful insight in the diagrams: given an edge, one can partition the set of perfect matchings between those that have this edge and those that don't:
\begin{center}
  $\def\fig{def_edge_rewrite}{\setlength{\fboxsep}{0pt}\colorbox{gray!15}{~\strut\input{./figures/\fig/\fig_0.tikz}}} = {\setlength{\fboxsep}{0pt}\colorbox{gray!15}{~\strut\input{./figures/\fig/\fig_1.tikz}}} + {\setlength{\fboxsep}{0pt}\colorbox{gray!15}{~\strut\input{./figures/\fig/\fig_2.tikz}}}$
\end{center}

In the case where the graph is an actual graph, without half edges, the resulting map is a scalar (no input or outputs).
One can show by induction that this scalar corresponds to the number of ways of choosing a set of edges such that each vertex is covered by exactly one edge. In other ways, \emph{the number of perfect matchings} of the graph.

\subsection{Binary White Spider}
The last generator of the planar W-calculus is the \emph{binary white spider}, given, for any $r\in \C$, by:
\begin{center}
  $\displaystyle
  \interp{\def\fig{def}{\setlength{\fboxsep}{0pt}\colorbox{gray!15}{~\strut\input{./figures/\fig/\fig_white.tikz}}}} :=
  \begin{pmatrix}
    1 & 0\\
    0 & r
  \end{pmatrix}
  $
\end{center}
which corresponds to the usual binary white spider with weight $r$ of the ZW-calculus.
This binary spider corresponds to having a weight $r$ on an edge of the graph. When $r\in\N$, the interpretation is straightforward: the white spider can be replaced by $r$ edges:
  ${\def\fig{def}{\setlength{\fboxsep}{0pt}\colorbox{gray!15}{~\strut\input{./figures/\fig/\fig_white_N_0.tikz}}} = \def\fig{def}{\setlength{\fboxsep}{0pt}\colorbox{gray!15}{~\strut\input{./figures/\fig/\fig_white_N.tikz}}}}$.
And in particular, 
$\def\fig{def}{\setlength{\fboxsep}{0pt}\colorbox{gray!15}{~\strut\input{./figures/\fig/\fig_white_1_0.tikz}}} = \def\fig{def}{\setlength{\fboxsep}{0pt}\colorbox{gray!15}{~\strut\input{./figures/\fig/\fig_white_1_1.tikz}}}$.

Let us interpret the white spiders as weights on the edges of a planar graph $G$ with black spiders on their vertices.
Consider one perfect matching of the same graph $G'$ without weights and consider one perfect matching $P$ of $G'$. If the edge $e$ that belongs to $P$ has a weight $r\in\N$, then it can be replaced by $r$ edges. In other words, the single perfect matching $P$ is replaced by $r$ perfect matchings when $e$ has weight $r$.
By doing this for every edges, one can see that each perfect matching in $G'$ corresponds to a perfect matching of $G$ with a \emph{weight} that is the product of all its edge weights, instead of weight $1$ in $G'$.
For $r\in\C$, one cannot replace a white spider by a given number of edges, but the interpretation is the same: the edge contribute to the perfect matchings that contain it with a \emph{weight} $r$.

\begin{example}
  $\def\fig{def_example}{\setlength{\fboxsep}{0pt}\colorbox{gray!15}{~\strut\input{./figures/\fig/\fig_0.tikz}}} = {\setlength{\fboxsep}{0pt}\colorbox{gray!15}{~\strut\input{./figures/\fig/\fig_1.tikz}}} + {\setlength{\fboxsep}{0pt}\colorbox{gray!15}{~\strut\input{./figures/\fig/\fig_2.tikz}}} = 2 - 1 = 1$
\end{example}

Diagrams generated by the black and binary white node, within the framework described at the beginning of the section, are called $\pW$-diagrams.

\subsection{The FKT Algorithm}
In general, counting the number of perfect matchings in a graph is an \textsc{\#P}-complete problem \cite{Valiant}. However, for planar graphs the same problem turns out to be surprisingly easy, as Fisher, Temperley and Kastelyn showed that it is in \textsc{P} \cite{doi:10.1080/14786436108243366, kasteleyn1967graph}.
The main idea behind the algorithm is that for planar graphs, it is possible to find a good orientation of the edges (called a Pfaffian orientation) in polynomial time such that the number of perfect matchings is the Pfaffian of the adjacency matrix $A$ (actually its skew-symmetric version, called Tutte matrix) of the oriented graph. A result due to Cayley then shows that the Pfaffian is the square root of the determinant of $A$.

Note that one can find such an orientation for any planar graph, even weighted with complex weights, and the equality $pf(A) = \sqrt{det(A))}$ still holds. Therefore, computing the total \emph{weight} of perfect matchings in a complex-weighted graph is in \textsc{P}.

\begin{proposition}\label{prop:P}
  Let $D$ be a scalar $\pW$-diagram. Then $\interp D$ is computable in polynomial time in the number of black nodes.
\end{proposition}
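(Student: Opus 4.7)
The plan is to reduce $\interp{D}$ to a weighted perfect matching count on a planar graph, then invoke the FKT algorithm recalled above. Explicitly, I would build the planar weighted graph $G_D$ underlying $D$: using flexsymmetry of the black spider, $D$ is viewed as a plane graph whose vertices are the black spiders and whose edges are the wires connecting them; if a wire passes through binary white spiders with labels $r_1,\dots,r_k$, the corresponding edge receives weight $\prod_i r_i$ (weight $1$ if no white spider is present). Connected components of the diagram that contain no black spider are closed loops decorated with white spiders and cups/caps; using the snake equations, each such loop contributes a scalar $1 + \prod_i r_i$ to $\interp{D}$ which factors out and is computable in time linear in its description.

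Second, I would establish the identity
\[
    \interp{D} \;=\; S \cdot \sum_{M \in \mathrm{PM}(G_D)} \prod_{e \in M} w(e),
\]
where $S$ is the product of the loop contributions. This follows by induction on the number of edges of $G_D$, using the edge-splitting identity $\ketbra{0}{0}+\ketbra{1}{1}$ already highlighted in the paper to distribute $\interp{D}$ into two summands per edge (edge present vs.\ absent). The black-spider interpretation $\sum_{|uv|=1}\ketbra{u}{v}$ then forces the retained edge set to cover each vertex exactly once -- that is, to form a perfect matching -- while white spiders multiply retained edges by their weights, as argued just before the example.

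Third, I would observe that $G_D$ is planar: $\pW$-diagrams are assembled using only tensor, composition, cups, and caps -- never the swap -- so any such $D$ admits a crossing-free drawing in the strip, from which $G_D$ inherits a planar embedding. The construction of $G_D$ from $D$ runs in time linear in the size of $D$, and $G_D$ has exactly as many vertices as $D$ has black nodes. The FKT algorithm, applied to $G_D$ with this embedding, produces a Pfaffian orientation in polynomial time, and $\mathrm{pf}(A) = \sqrt{\det A}$ of the associated skew-symmetric weighted Tutte matrix then yields the weighted matching sum in polynomial time in $|V(G_D)|$, hence in the number of black nodes of $D$. The main obstacle in this plan is the precise bookkeeping of sub-diagrams containing no black spider (loops, dangling white spiders, bare caps/cups), but these contribute only trivial scalar factors that do not inflate the polynomial bound.
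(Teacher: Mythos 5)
Your proposal is correct and follows exactly the route the paper intends: the paper leaves \cref{prop:P} without an explicit proof precisely because the preceding subsections already supply the two ingredients you spell out, namely the combinatorial reading of a scalar $\pW$-diagram as a weighted perfect-matching sum over its (planar, since swap-free) underlying graph, and the FKT/Pfaffian computation of that sum in polynomial time. Your additional bookkeeping of black-node-free components as trivial scalar factors is a reasonable way to make the sketch fully rigorous and introduces no gap.
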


\subsection{Fermionic Swap}
The usual ZW-calculus does have another generator that we did not explicitly include in our fragment, called the \emph{fermionic swap}:
\begin{center}
  $\displaystyle\interp{\tikzfig{fswap}} :=
  \sum_{x,y\in\{0,1\}}(-1)^{xy}\ketbra{x}{y}
  $
\end{center}
 However, it turns out that the fermionic swap is just syntactic sugar, and it is actually in our fragment:
\def\fig{fswap-decomposition}
\begin{align*}
{\setlength{\fboxsep}{0pt}\colorbox{gray!15}{~\strut}}
~:=~{\setlength{\fboxsep}{0pt}\colorbox{gray!15}{~\strut}}
\end{align*}

Notice that the previous equation also appears in \cite{DBLP:journals/toc/CaiG14} to relate planar and non-planar matchgates. It is very useful to treat this piece of diagram as a generator of its own, especially as a particular kind of swap, which shares a lot of (but not all) properties of the symmetric braiding of props. In particular:
		\def\fig{fswap-invol-prf}
		\begin{align*}
			{\setlength{\fboxsep}{0pt}\colorbox{gray!15}{~\strut}}
			\eq{}{\setlength{\fboxsep}{0pt}\colorbox{gray!15}{~\strut\input{./figures/\fig/\fig_05.tikz}}}
			\hspace*{4em}
			\def\fig{fpswap}
			{\setlength{\fboxsep}{0pt}\colorbox{gray!15}{~\strut}}
			\eq{}(-1)^{|D|}{\setlength{\fboxsep}{0pt}\colorbox{gray!15}{~\strut}}
		\end{align*}
%

Where $|D|$ is the number of black nodes in the diagram $D$.

\section{Completeness}\label{sec:3}

The planar W-calculus is introduced with an equational theory, given in \Cref{fig:axioms}, relating together diagrams with the same semantics. We write $\pW\vdash D_1=D_2$ when one can turn diagram $D_1$ into diagram $D_2$ by applying the equations of \Cref{fig:axioms} locally.

\begin{figure}[!htb]
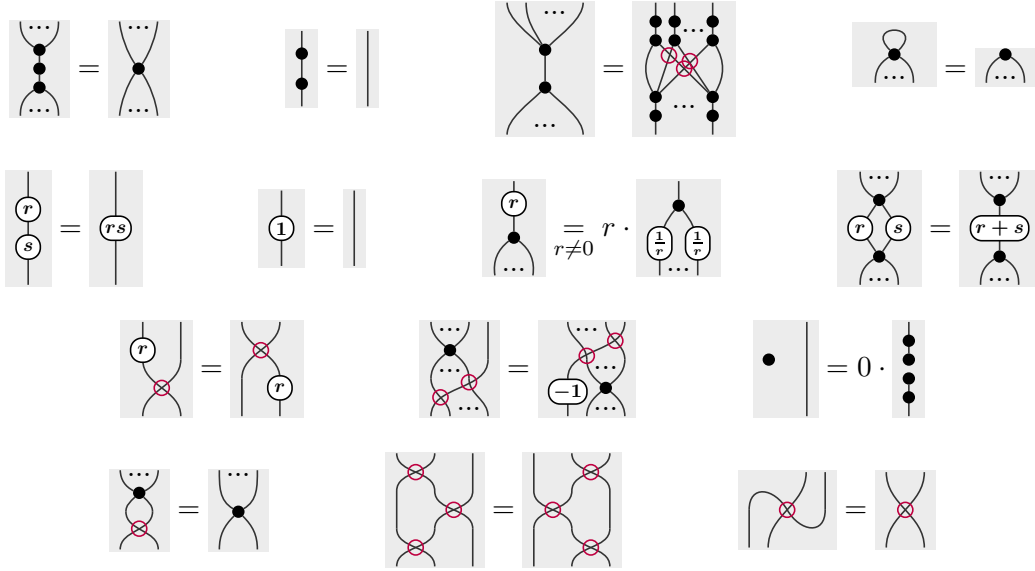

	\centering
	$\def\fig{ax-W-spider}
	{\setlength{\fboxsep}{0pt}\colorbox{gray!15}{~\strut\input{./figures/\fig/\fig_00.tikz}}}
	\eq{}{\setlength{\fboxsep}{0pt}\colorbox{gray!15}{~\strut\input{./figures/\fig/\fig_01.tikz}}}
	\hspace*{4em}
	\def\fig{ax-W-binary}
	{\setlength{\fboxsep}{0pt}\colorbox{gray!15}{~\strut\input{./figures/\fig/\fig_00.tikz}}}
	\eq{}{\setlength{\fboxsep}{0pt}\colorbox{gray!15}{~\strut\input{./figures/\fig/\fig_01.tikz}}}
	\hspace*{4em}
	\def\fig{ax-W-bialgebra}
	{\setlength{\fboxsep}{0pt}\colorbox{gray!15}{~\strut\input{./figures/\fig/\fig_00.tikz}}}
	\eq{}{\setlength{\fboxsep}{0pt}\colorbox{gray!15}{~\strut\input{./figures/\fig/\fig_01.tikz}}}
	\hspace*{4em}
	\def\fig{ax-W-loop}
	{\setlength{\fboxsep}{0pt}\colorbox{gray!15}{~\strut\input{./figures/\fig/\fig_00.tikz}}}
	\eq{}{\setlength{\fboxsep}{0pt}\colorbox{gray!15}{~\strut\input{./figures/\fig/\fig_01.tikz}}}
	$
	
	\bigskip
	$
	\def\fig{ax-phase-fusion}
	{\setlength{\fboxsep}{0pt}\colorbox{gray!15}{~\strut\input{./figures/\fig/\fig_00.tikz}}}
	\eq{}{\setlength{\fboxsep}{0pt}\colorbox{gray!15}{~\strut\input{./figures/\fig/\fig_01.tikz}}}
	\hspace*{4em}
	\def\fig{ax-Z-binary}
	{\setlength{\fboxsep}{0pt}\colorbox{gray!15}{~\strut\input{./figures/\fig/\fig_00.tikz}}}
	\eq{}{\setlength{\fboxsep}{0pt}\colorbox{gray!15}{~\strut\input{./figures/\fig/\fig_01.tikz}}}
	\hspace*{4em}
	\def\fig{ax-phase-distrib}
	{\setlength{\fboxsep}{0pt}\colorbox{gray!15}{~\strut\input{./figures/\fig/\fig_00.tikz}}}
	\eq{r\neq0}r\cdot{\setlength{\fboxsep}{0pt}\colorbox{gray!15}{~\strut\input{./figures/\fig/\fig_01.tikz}}}
	\hspace*{4em}
	\def\fig{ax-sum}
	{\setlength{\fboxsep}{0pt}\colorbox{gray!15}{~\strut\input{./figures/\fig/\fig_00.tikz}}}
	\eq{}{\setlength{\fboxsep}{0pt}\colorbox{gray!15}{~\strut\input{./figures/\fig/\fig_01.tikz}}}
	$
	
	\bigskip
	$
	\def\fig{ax-fswap-Z}
	{\setlength{\fboxsep}{0pt}\colorbox{gray!15}{~\strut\input{./figures/\fig/\fig_00.tikz}}}
	\eq{}{\setlength{\fboxsep}{0pt}\colorbox{gray!15}{~\strut\input{./figures/\fig/\fig_01.tikz}}}
	\hspace*{4em}
	\def\fig{ax-fswaps-W}
	{\setlength{\fboxsep}{0pt}\colorbox{gray!15}{~\strut\input{./figures/\fig/\fig_00.tikz}}}
	\eq{}{\setlength{\fboxsep}{0pt}\colorbox{gray!15}{~\strut\input{./figures/\fig/\fig_01.tikz}}}
	\hspace*{4em}
	\def\fig{ax-zero}
	{\setlength{\fboxsep}{0pt}\colorbox{gray!15}{~\strut\input{./figures/\fig/\fig_00.tikz}}}
	\eq{}0\cdot{\setlength{\fboxsep}{0pt}\colorbox{gray!15}{~\strut\input{./figures/\fig/\fig_01.tikz}}}
	$
	
	\bigskip
	$
	\def\fig{ax-fswap-removal}
	{\setlength{\fboxsep}{0pt}\colorbox{gray!15}{~\strut\input{./figures/\fig/\fig_00.tikz}}}
	\eq{}{\setlength{\fboxsep}{0pt}\colorbox{gray!15}{~\strut\input{./figures/\fig/\fig_01.tikz}}}
	\hspace*{4em}
	\def\fig{ax-fswap-YB}
	{\setlength{\fboxsep}{0pt}\colorbox{gray!15}{~\strut\input{./figures/\fig/\fig_00.tikz}}}
	\eq{}{\setlength{\fboxsep}{0pt}\colorbox{gray!15}{~\strut\input{./figures/\fig/\fig_01.tikz}}}
	\hspace*{4em}
	\def\fig{ax-fswap-rotated}
	{\setlength{\fboxsep}{0pt}\colorbox{gray!15}{~\strut\input{./figures/\fig/\fig_00.tikz}}}
	\eq{}{\setlength{\fboxsep}{0pt}\colorbox{gray!15}{~\strut\input{./figures/\fig/\fig_01.tikz}}}
	$
	\caption{Axioms of the planar W-calculus.}
	\label{fig:axioms}
\end{figure}

\begin{proposition}
	The equational theory of \Cref{fig:axioms} preserves the semantics:
	\[\pW\vdash D_1=D_2 \quad\implies\quad \interp{D_1}=\interp{D_2}\]
\end{proposition}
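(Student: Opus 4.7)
The plan is to proceed by induction on the length of the derivation of $\pW\vdash D_1 = D_2$. The equational theory is the congruence closure of the axioms of \Cref{fig:axioms} under the two operations of the pro, composition and tensor product. Since the interpretation $\interp{\cdot}:\pW\to\cat{Qubit}$ was defined inductively on $\otimes$ and $\circ$, it is a pro morphism, so every use of congruence preserves the semantics by functoriality. The proof therefore reduces to a finite case-check: one must verify, for each axiom of \Cref{fig:axioms} separately, that the two diagrams have the same interpretation as a linear map. Since black nodes are flexsymmetric (this was already justified semantically), one may freely identify isomorphic diagrams in this check.

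I would organise the case-check into three families. First, for the axioms involving only black spiders (the W-spider fusion, the bialgebra axiom, and the loop axioms), the most economical route is to use the perfect-matching interpretation developed in \Cref{sec:2}: each side is a generating function for perfect matchings of the underlying graph with prescribed boundary values, and the axiom exhibits an easy bijection between the matching sets of the two sides. Second, for the axioms in which a binary white spider appears alongside a black spider (phase-fusion, Z-binary, phase-distribution, the sum axiom, the zero axiom), the white spider has a diagonal $2\times 2$ matrix in the computational basis, so the identities reduce to short scalar calculations on basis vectors, using again the matching description of the black nodes. Third, for the axioms involving the fermionic swap, one may either unfold its defining syntactic-sugar equation given earlier and thereby reduce to the previous two cases, or use the closed-form matrix $\sum_{x,y}(-1)^{xy}\ketbra{x}{y}$ and evaluate on basis states.

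The main, and really the only, obstacle lies in the third family, in particular the Yang--Baxter axiom for the fermionic swap and the rules that let it interact with the other generators. Here one must carefully track the $(-1)^{xy}$ signs contributed by each crossing, which is precisely what motivates the parity identity stated earlier in terms of the number of black nodes $|D|$. Nevertheless, every axiom involves only a small fixed number of wires, so all the remaining verifications are finite-dimensional matrix identities on a few qubits, dispatchable by direct computation; no conceptual difficulty arises beyond sign bookkeeping.
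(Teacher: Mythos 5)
Your proposal is correct, but it takes a genuinely different route from the paper. The paper does not verify the axioms one by one: its justification (given inside the proof of \Cref{thm:completeness}) is simply that the equational theory of \Cref{fig:axioms} is a subpart of the equational theory of the full ZW-calculus, whose soundness is already established in the literature (Hadzihasanovic), so soundness of the fragment is inherited for free. Your argument is instead the standard self-contained one: reduce to the axioms via functoriality of $\interp{\cdot}$ over the congruence closure, then dispatch each axiom by a finite matrix or perfect-matching computation, unfolding the fermionic swap into its defining decomposition where needed. What your approach buys is independence from the external ZW result and a combinatorial reading of each axiom (which fits the spirit of \Cref{sec:2}); what the paper's approach buys is brevity, at the cost of having to check that every axiom of \Cref{fig:axioms} really is an instance of, or derivable in, the sound ZW theory --- a point the paper leaves implicit. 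Your case-check is a sketch rather than a worked verification, but since each axiom involves a bounded number of wires the remaining work is routine, and the level of detail matches (indeed exceeds) what the paper itself provides.
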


In the following, we will show that the converse also holds, that is, that whenever two diagrams have the same semantics, they can be turned into one another using the equational theory. Intuitively, this implies that the equational theory completely captures the interaction of generators with one another in the fragment.

To show this result, we give a notion of normal form, which we call W-graph-state with X-gates (WGS-X for short), then a refinement of that normal form (reduced WGS-X form) which can be shown to be unique, and we give a rewrite strategy (derivable from the equational theory) to turn any $\pW$-diagram into this form.

\subsection{Normal Form}

The first step we take towards defining a normal form is a simplification, making use of the compact structure of the underlying pro, where we relate maps and states:
\begin{proposition}
	\label{prop:process-state-duality}
	There is an isomorphism between $\pW(n,m)$ and $\pW(0,n+m)$ defined as such:
	\def\fig{process-state-duality}
	\begin{align*}
		{\setlength{\fboxsep}{0pt}\colorbox{gray!15}{~\strut}}
		\def\fig{sateform}
		\mapsto{\setlength{\fboxsep}{0pt}\colorbox{gray!15}{~\strut}}
		\def\fig{process-state-duality}
		:={\setlength{\fboxsep}{0pt}\colorbox{gray!15}{~\strut}}
	\end{align*}
\end{proposition}
This isomorphism allows us only to consider states rather than maps in the following.

Then, we define W-graph-states, by first defining ordered weighted graphs:
\begin{definition}[Ordered $R$-Weighted Graph]
	$G=(V,E,w)$ is called an ordered $R$-weigthed graph if:
	\begin{itemize}
		\item $V$ is a set endowed with a total order $\prec$ (or equivalently a sequence)
		\item $E \subset V\times V$ is such that $(u,v)\in E\implies u\prec v$
		\item $w:E\to R\setminus\{0\}$ maps each edge to its weight
	\end{itemize}
\end{definition}

\begin{definition}[W-Graph-State]
	Let $G=(V,E,w)$ be an ordered weighted graph. Then, $\operatorname{WGS}(G)$ is defined as the $\pW$-diagram where:
	\begin{itemize}
		\item Each vertex in $V$ gives a W-spider linked to an output through an additional $\tikzfig{X}$ (the order on $V$ gives the order of the outputs)
		\item Each (weighted) edge $(u,v)$ gives a white dot with parameter $w((u,v))$ linked to the W-spiders obtained from $u$ and $v$
		\item All wire crossings in $\operatorname{WGS}(G)$ are fermionic swaps $\tikzfig{fswap}$
		\item No output wire crosses another wire
		\item There are no self-intersecting wires
	\end{itemize}
\end{definition}
When an edge has weight $1$ we may ignore the white dot and represent the edge as a simple wire, since $\def\fig{ax-Z-binary}
{\setlength{\fboxsep}{0pt}\colorbox{gray!15}{~\strut}}
\eq{}{\setlength{\fboxsep}{0pt}\colorbox{gray!15}{~\strut}}$. 
Notice that there are several ways to build $\operatorname{WGS}(G)$, but all of them are equivalent thanks to $\def\fig{ax-fswap-Z}
{\setlength{\fboxsep}{0pt}\colorbox{gray!15}{~\strut}}
\eq{}{\setlength{\fboxsep}{0pt}\colorbox{gray!15}{~\strut}}$ and the axioms on the fermionic swap $\tikzfig{fswap}$, together with the provable identities in Lemmas \ref{lem:floop} and \ref{lem:fswap-invol}:

\begin{multicols}{2}
	
	\begin{lemma}
		\label{lem:floop}
		\def\fig{floop-prf}
		\begin{align*}
			{\setlength{\fboxsep}{0pt}\colorbox{gray!15}{~\strut}}
			\eq{}{\setlength{\fboxsep}{0pt}\colorbox{gray!15}{~\strut\input{./figures/\fig/\fig_05.tikz}}}
		\end{align*}
	\end{lemma}
	
	\begin{lemma}
		\label{lem:fswap-invol}
		\def\fig{fswap-invol-prf}
		\begin{align*}
			{\setlength{\fboxsep}{0pt}\colorbox{gray!15}{~\strut}}
			\eq{}{\setlength{\fboxsep}{0pt}\colorbox{gray!15}{~\strut\input{./figures/\fig/\fig_05.tikz}}}
		\end{align*}
	\end{lemma}
	
\end{multicols}

\begin{definition}[WGS-X form]
	We say that a $\pW$-state $D$ on $n$ qubits is in:
	\begin{itemize}
		\item \textbf{WGS-X form} if there exist $s\in \mathbb C$, $G=([1,n],E,w)$ an ordered graph, and $\vec b \in\{0,1\}^n$ such that $D = s\cdot\left(\tikzfig{X}^{\otimes \vec b}\right)\circ \operatorname{WGS}(G)$.
		\item \textbf{pseudo-WGS-X form} if it is in WGS-X form with potentially vertices linked to several outputs, additional $\tikzfig{edge}$ ($r\neq0$) on wires that do not correspond to edges in the graph, and potentially fermionic swaps $\tikzfig{fswap}$ between outputs.
		\item \textbf{reduced WGS-X form} (rWGS-X) if it is in WGS-X form and:
		\[\forall i,~\left(b_i=0\implies \nexists j,~ (i,j)\in E\right)\]
		i.e.~$b_i=0$ is only possible if vertex $i$ has no neighbour on its right.
	\end{itemize}
\end{definition}

\begin{example}
	\def\fig{example-graph-state}
	$\operatorname{WGS}\left( \begin{tikzpicture}[scale=0.7]
	\begin{pgfonlayer}{nodelayer}
		\node [style=dot] (0)  at (-2.25, 0.125) {};
		\node [style=dot] (1)  at (-1.25, -0.625) {};
		\node [style=dot] (2)  at (0.0, -0.875) {};
		\node [style=dot] (3)  at (1.25, -0.625) {};
		\node [style=dot] (4)  at (2.25, 0.125) {};
		\node [style=none, font={\scriptsize}] (5)  at (0.0, 0.875) {$i$};
		\node [style=none, font={\scriptsize}] (6)  at (1.0, 0.125) {$-1$};
		\node [style=none, font={\scriptsize}] (7)  at (-2.0, -0.375) {$2$};
	\end{pgfonlayer}
	\begin{pgfonlayer}{edgelayer}
		\draw (0) to (1);
		\draw [bend left] (0) to (2);
		\draw [bend left] (1) to (3);
		\draw [bend left] (2) to (4);
		\draw (2) to (3);
		\draw [bend right=60] (4) to (1);
		\draw (4) to (3);
	\end{pgfonlayer}
\end{tikzpicture} \right) = {\setlength{\fboxsep}{0pt}\colorbox{gray!15}{~\strut}}$
	where in the starting graph, vertices are ordered left to right, and edges with no indication of weight have weight $1$.
	
	If $\vec b = (0,1,1,0,1)$, then the obtained WGS-X state is:
	\[s\cdot{\setlength{\fboxsep}{0pt}\colorbox{gray!15}{~\strut\input{./figures/\fig/\fig_02.tikz}}}\]
	where we used the fact that $\tikzfig{X}$ is an involution to simplify the diagram. 
	The WGS-X state is however not reduced, as both the first and fourth qubits have additional $\tikzfig{X}$ applied to them, but still have neighbours on their right.
	
	Finally, the following diagram is an example of a pseudo-WGS-X state: 
	\[{\setlength{\fboxsep}{0pt}\colorbox{gray!15}{~\strut\input{./figures/\fig/\fig_03.tikz}}}\]
	
\end{example}

\subsection{Rewrite Strategy}

We define in this section a rewrite strategy, derived from the equational theory, that will terminate in a normal form (WGS-X). Doing this naively is made difficult by the potential presence of fermionic swaps $\tikzfig{fswap}$ wherever we are looking for patterns to rewrite. Thankfully, the last 5 equations in \Cref{fig:axioms},together with the above Lemmas \ref{lem:floop} and \ref{lem:fswap-invol} essentially tell us that we can treat those as usual swaps with the only catch that removing self loops or moving wires past black nodes adds a $-1$ weight to the wires.

In the upcoming rewrite strategy, we will hence only specify the patterns without potential fermionic swaps inside. Should there be some present, it is understood that they will be moved out of the pattern, before the rewrite occurs. The rules necessary for the rewrite strategy are given in \Cref{fig:rewrites}.

%

\begin{figure}[!htb]
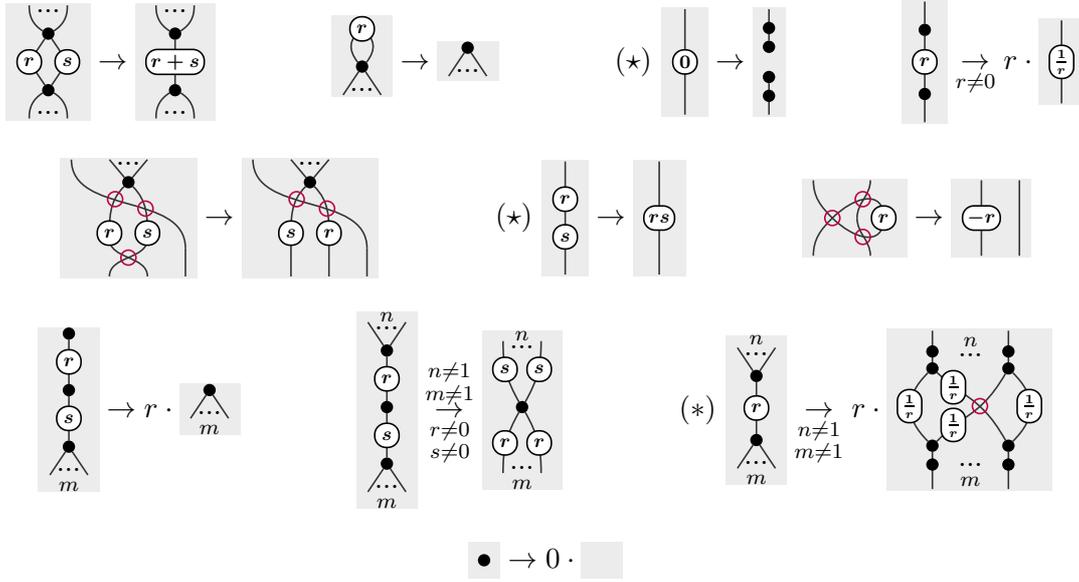

	\centering
	$
	\def\fig{ax-sum}
	{\setlength{\fboxsep}{0pt}\colorbox{gray!15}{~\strut\input{./figures/\fig/\fig_00.tikz}}}
	\to{\setlength{\fboxsep}{0pt}\colorbox{gray!15}{~\strut\input{./figures/\fig/\fig_01.tikz}}}
	\hspace*{4em}
	\def\fig{rewrite-W-loop}
	{\setlength{\fboxsep}{0pt}\colorbox{gray!15}{~\strut\input{./figures/\fig/\fig_00.tikz}}}
	\to{\setlength{\fboxsep}{0pt}\colorbox{gray!15}{~\strut\input{./figures/\fig/\fig_05.tikz}}}
	\hspace*{4em}
	\def\fig{0-edge-prf}
	(\star)~
	{\setlength{\fboxsep}{0pt}\colorbox{gray!15}{~\strut\input{./figures/\fig/\fig_00.tikz}}}
	\to{\setlength{\fboxsep}{0pt}\colorbox{gray!15}{~\strut\input{./figures/\fig/\fig_07.tikz}}}
	\hspace*{4em}
	\def\fig{rewrite-binary-W}
	{\setlength{\fboxsep}{0pt}\colorbox{gray!15}{~\strut\input{./figures/\fig/\fig_00.tikz}}}
	\underset{r\neq0}\to r\cdot{\setlength{\fboxsep}{0pt}\colorbox{gray!15}{~\strut\input{./figures/\fig/\fig_02.tikz}}}
	$
	
	\bigskip
	$
	\def\fig{rewrite-fswap-removal}
	{\setlength{\fboxsep}{0pt}\colorbox{gray!15}{~\strut\input{./figures/\fig/\fig_00.tikz}}}
	\to{\setlength{\fboxsep}{0pt}\colorbox{gray!15}{~\strut\input{./figures/\fig/\fig_01.tikz}}}
	\hspace*{4em}
	\def\fig{ax-phase-fusion}
	(\star)~
	{\setlength{\fboxsep}{0pt}\colorbox{gray!15}{~\strut\input{./figures/\fig/\fig_00.tikz}}}
	\to{\setlength{\fboxsep}{0pt}\colorbox{gray!15}{~\strut\input{./figures/\fig/\fig_01.tikz}}}
	\hspace*{4em}
	\def\fig{rewrite-floop}
	{\setlength{\fboxsep}{0pt}\colorbox{gray!15}{~\strut\input{./figures/\fig/\fig_00.tikz}}}
	\to{\setlength{\fboxsep}{0pt}\colorbox{gray!15}{~\strut\input{./figures/\fig/\fig_01.tikz}}}
	$
	
	\bigskip
	$
	\def\fig{rewrite-fusion-0}
	{\setlength{\fboxsep}{0pt}\colorbox{gray!15}{~\strut\input{./figures/\fig/\fig_00.tikz}}}
	\to r\cdot{\setlength{\fboxsep}{0pt}\colorbox{gray!15}{~\strut\input{./figures/\fig/\fig_03.tikz}}}
	\hspace*{4em}
	\def\fig{rewrite-fusion}
	{\setlength{\fboxsep}{0pt}\colorbox{gray!15}{~\strut\input{./figures/\fig/\fig_00.tikz}}}
	\overset{\substack{n\neq1\\m\neq1}}{\underset{\substack{r\neq0\\s\neq0}}\to}{\setlength{\fboxsep}{0pt}\colorbox{gray!15}{~\strut\input{./figures/\fig/\fig_03.tikz}}}
	\hspace*{4em}
	\def\fig{rewrite-pivot}
	(\ast)~{\setlength{\fboxsep}{0pt}\colorbox{gray!15}{~\strut\input{./figures/\fig/\fig_00.tikz}}}
	\underset{\substack{n\neq1\\m\neq1}}\to r\cdot{\setlength{\fboxsep}{0pt}\colorbox{gray!15}{~\strut\input{./figures/\fig/\fig_03.tikz}}}$
	
	\bigskip
	$
	\def\fig{rewrite-zero}
	{\setlength{\fboxsep}{0pt}\colorbox{gray!15}{~\strut\input{./figures/\fig/\fig_00.tikz}}}
	\to0\cdot{\setlength{\fboxsep}{0pt}\colorbox{gray!15}{~\strut\input{./figures/\fig/\fig_01.tikz}}}$
	\caption{Rewrite rules. All these rules except $(\star)$ are supposed to apply when any of the white nodes are replaced by identity (i.e.~when their weight is $1$). Rule $(\ast)$ can only be applied if at least one of the black nodes is internal, and if none of the other rules applies.}
	\label{fig:rewrites}
\end{figure}

\begin{figure}[!htb]
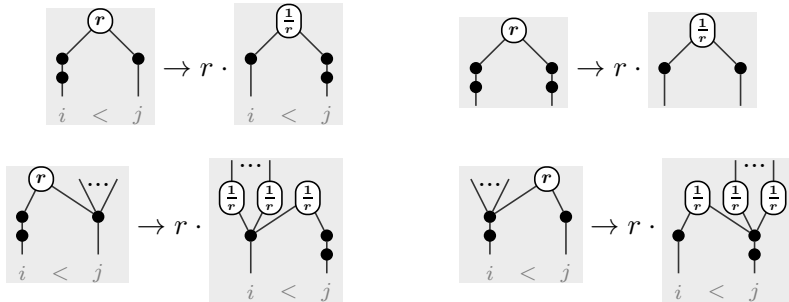

	\centering
	\def\fig{reducing-WGS-X}
	$
	{\setlength{\fboxsep}{0pt}\colorbox{gray!15}{~\strut\input{./figures/\fig/\fig_00.tikz}}}
	\to r\cdot{\setlength{\fboxsep}{0pt}\colorbox{gray!15}{~\strut\input{./figures/\fig/\fig_01.tikz}}}
	\hspace*{4em}
	{\setlength{\fboxsep}{0pt}\colorbox{gray!15}{~\strut\input{./figures/\fig/\fig_02.tikz}}}
	\to r\cdot{\setlength{\fboxsep}{0pt}\colorbox{gray!15}{~\strut\input{./figures/\fig/\fig_03.tikz}}}
	$
	
	\bigskip
	$
	{\setlength{\fboxsep}{0pt}\colorbox{gray!15}{~\strut\input{./figures/\fig/\fig_04.tikz}}}
	\to r\cdot{\setlength{\fboxsep}{0pt}\colorbox{gray!15}{~\strut\input{./figures/\fig/\fig_05.tikz}}}
	\hspace*{4em}
	{\setlength{\fboxsep}{0pt}\colorbox{gray!15}{~\strut\input{./figures/\fig/\fig_06.tikz}}}
	\to r\cdot{\setlength{\fboxsep}{0pt}\colorbox{gray!15}{~\strut\input{./figures/\fig/\fig_07.tikz}}}$
	\caption{Rules for reduced WGS-X form, together with rule $(\ast)$ when the leftmost black node is a type-0 boundary node.}
	\label{fig:rewrite-reduce}
\end{figure}

\begin{proposition}
	\label{prop:rewrites-derivable}
	The rewrite rules of \Cref{fig:rewrites} are derivable from the equational theory of \Cref{fig:axioms} and hence are sound.
\end{proposition}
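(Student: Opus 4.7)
The plan is to verify each rewrite rule of \Cref{fig:rewrites} separately, exhibiting in each case a derivation from the axioms of \Cref{fig:axioms} (invoking Lemmas \ref{lem:floop} and \ref{lem:fswap-invol} where convenient, since they have already been derived from those axioms). Once derivability is established, soundness of each rewrite follows immediately from the preceding proposition that the equational theory preserves semantics. Structurally, I would first dispose of the rules that are essentially immediate re-readings of axioms, then handle the fusion rules, and finally concentrate on the pivot rule, which is the only genuinely non-local step.

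For the immediate cases, the sum and zero rewrites are ax-sum and ax-zero read left to right; the binary-W rewrite is an instance of ax-phase-distrib; rewrite-fswap-removal is ax-fswap-removal; and the phase-fusion rewrite and the floop rewrite are precisely ax-phase-fusion and Lemma \ref{lem:floop}. The W-loop rewrite combines ax-W-loop with ax-phase-distrib and ax-Z-binary to account for the scalar absorbed when the loop is contracted. The $0$-edge rule $(\star)$ is obtained by applying ax-sum backwards around the white dot of weight $0$, then using ax-phase-distrib (with $r=0$, which is why this rule is flagged $(\star)$ rather than following the general convention) and ax-zero to collapse the resulting branches. The small-scalar fusion rule rewrite-fusion-0 is derived by absorbing the connecting white dot into an adjacent black node via ax-phase-distrib, then fusing the two adjacent W-spiders with ax-W-spider and collecting the resulting scalar; the generic fusion rule is analogous, with the constraints $n\neq 1$, $m\neq 1$ ensuring that the degenerate cases (which would be covered by other rules) do not arise.

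The main obstacle will be the pivot rule $(\ast)$. I would derive it by applying the bialgebra axiom ax-W-bialgebra along the edge joining the two black spiders, which opens up a complete-bipartite-like pattern between the neighbours of the two nodes; subsequent fusion of the created nodes via ax-W-spider on each side, together with absorption of the resulting weights using ax-Z-binary and ax-phase-distrib, reorganises the local neighbourhood into the form claimed on the right-hand side. The side condition $n\neq1,m\neq1$ and the requirement that at least one black node be internal ensure that no boundary-node obstruction prevents this bialgebra expansion. Throughout these manipulations, fermionic swaps generated by the rewiring must be pushed out of the local pattern using the last five axioms of \Cref{fig:axioms} together with Lemma \ref{lem:fswap-invol}, as indicated in the paragraph preceding the figure; special care is needed to track the sign contributions coming from parity of black-node counts, compare the fpswap identity stated just after the decomposition of the fermionic swap. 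Once these manipulations are spelled out, every rule in \Cref{fig:rewrites} is seen to be a finite composition of axiom instances, which is exactly what the proposition asks for.
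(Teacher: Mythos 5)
Your overall architecture is the same as the paper's: its proof of this proposition is nothing but a rule-by-rule list of explicit axiom chains, written out only for the rules that are not literal restatements of an axiom or of \Cref{lem:floop} (namely the $0$-edge rule, the W-loop rule, the binary-W rule, the two fusion rules, and the pivot rule $(\ast)$), which is essentially the triage you propose. Your treatment of the fusion rules (phase-distribution to extract the weights, then ax-W-spider, collecting the scalar) and of the pivot rule (ax-W-bialgebra along the internal edge followed by re-fusion and weight bookkeeping, with fermionic swaps pushed out of the pattern beforehand) is consistent with the derivations the paper gives.

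The one step that would fail as written is in your sketch of the $(\star)$ $0$-edge rule: you propose to invoke ax-phase-distrib \emph{with $r=0$}. That axiom carries the side condition $r\neq 0$, and not as a formality --- its right-hand side is $r$ times a diagram, so at $r=0$ it would equate the left-hand side with the zero map, whereas a weight-$0$ white dot is $\ketbra{0}{0}$, which disconnects the edge rather than annihilating the diagram. Any correct derivation of the $0$-edge rule (the paper's is a seven-step chain) has to go another way, e.g.\ by turning the weight-$0$ dot, via ax-sum and ax-zero, into disconnected $\ket{0}$-legs that are then absorbed into the adjacent black spiders. Relatedly, you have misread the $(\star)$ marker: per the caption of \Cref{fig:rewrites} it flags the rules that are \emph{not} additionally instantiated with their white dots replaced by identity wires; it is not an ``$r=0$ exception'' to the phase-distribution convention. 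Neither point disturbs your overall plan, but the $0$-edge derivation must be rerouted around the inapplicable axiom instance.
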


For the rewrite strategy to terminate, we need to distinguish between different types of nodes:
\begin{definition}[Boundary Node / Internal Node]
	A node is a \emph{boundary node of type 1} if it is linked directly to an output. A node is a \emph{boundary node of type 0} if it is connected to a binary boundary node of type 1.\\
	We say that a black node of $D$ is \emph{internal} if it is not a boundary node.
\end{definition}
The rewrite strategy is then laid out as follows:
\begin{definition}[Rewrite Strategy]
\label{def:rewrite-strat}
The rewrite strategy is defined in 3 steps:
\begin{enumerate}
\item \label{item:pseudo-WGS-X} Apply the rewrites of \Cref{fig:rewrites} in any order but following constraints, until none apply anymore. The diagram ends up in pseudo-WGS-X form.
\item \label{item:WGS-X} 
First, whenever a type-1 boundary is linked to $n>1$ outputs directly, apply $\def\fig{ax-W-binary}
{\setlength{\fboxsep}{0pt}\colorbox{gray!15}{~\strut}}
\to{\setlength{\fboxsep}{0pt}\colorbox{gray!15}{~\strut}}$
to the $n-1$ rightmost such outputs (the top black node then becomes a type-0 boundary node, the bottom one a type-1 boundary node). 
Then, push all potential fermionic swaps $\tikzfig{fswap}$ between outputs inside the graph part. 
Finally, move boundary weights up into the edges of the WGS using $\def\fig{ax-phase-distrib}{\setlength{\fboxsep}{0pt}\colorbox{gray!15}{~\strut}}\to r\cdot{\setlength{\fboxsep}{0pt}\colorbox{gray!15}{~\strut}}$.
The diagram ends up in WGS-X form.
\item \label{item:rWGS-X} Whenever a type-0 vertex in the graph has a right neighbour, depending on the arity of the nodes, apply rule $(\ast)$ or one of the rules of \cref{fig:rewrite-reduce} between the two nodes (and apply any other possible rule before going on).
\end{enumerate}
\end{definition}

A claim was made in \Cref{def:rewrite-strat} about the form of the diagram at the end of each step. Those claims are going to be proven in the following (\Cref{lem:rewrite-to-WGS-X}). At the same time, we are going to show that the rewrite terminates.

\begin{proposition}[Termination in rWGS-X form]
	\label{prop:termination}
	\label{lem:rewrite-to-WGS-X}
	The rewrite strategy terminates in polynomial time. Moreover, after Step \ref{item:pseudo-WGS-X} of the rewrite, the diagram is indeed in pseudo-WGS-X form, after Step \ref{item:WGS-X}, it is in WGS-X form, and after Step \ref{item:rWGS-X}, it is in rWGS-X form.
\end{proposition}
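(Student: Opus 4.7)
The plan is to prove termination and correctness separately for each of the three steps. For Step \ref{item:pseudo-WGS-X}, I would define a lexicographic termination measure whose primary component is the number of internal black nodes, followed by the total number of black nodes, then the number of non-trivial white binary nodes, and finally the number of fermionic swaps. Each rewrite of \Cref{fig:rewrites} must be shown to strictly decrease this measure: the fusion rules collapse two adjacent black spiders into one; the pivot rule $(\ast)$ is by its very side condition only applied when one of the involved black nodes is internal, and it eliminates that internal node (even if the pivoting creates additional edges or boundary structure, the primary component strictly drops); the rules involving scalars, binary-W cleanup, and zero-edges simplify a local component; and fermionic-swap removal reduces the swap count. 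Since each component is polynomially bounded in the size of the starting diagram, this yields a polynomial termination bound.

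For correctness of Step \ref{item:pseudo-WGS-X}, I would do a structural case analysis on a diagram $D$ to which no rule of \Cref{fig:rewrites} applies. Non-applicability of $(\ast)$ forces the absence of internal black nodes. Non-applicability of the fusion rules, together with the W-loop, zero-edge, and binary-W rewrites, pins down the local geometry around every remaining boundary node: every internal wire connects two distinct boundary nodes through at most one white binary node carrying a nonzero weight, and no residual self-loops survive. By the convention that all surviving wire crossings are fermionic swaps, $D$ then exactly matches the defining shape of a pseudo-WGS-X state.

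For Steps \ref{item:WGS-X} and \ref{item:rWGS-X}, termination is more immediate. In Step \ref{item:WGS-X} each sub-action is bounded by a quantity available at the start of the step: the number of outputs attached to a single spider (for the splitting using the W-spider axiom), the number of fermionic swaps between outputs (for the push-in), and the number of boundary weights (for their absorption into the W-graph edges), so the result is manifestly in WGS-X form. In Step \ref{item:rWGS-X}, I would use the measure ``number of type-0 vertices having a right neighbour''; each rule of \Cref{fig:rewrite-reduce}, together with $(\ast)$ applied between a type-0 boundary and its right neighbour, strictly decreases this count by resolving one such violating pair while preserving the absence of internal nodes. When this measure reaches zero the defining condition of rWGS-X form is satisfied.

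The main obstacle is Step \ref{item:pseudo-WGS-X}. The delicate point is two-fold. On the termination side, one must verify that the pivot rule $(\ast)$ -- which can add edges and, a priori, even new non-internal black nodes -- really does strictly decrease the primary component of the measure, so that it cannot increase any earlier lexicographic component; and one must confirm that all other rewrites (especially the fusion rules with their side conditions $n\neq 1, m\neq 1$ and $r,s\neq 0$) respect the ordering. On the correctness side, the case analysis characterising rule-irreducible diagrams must exhaustively rule out every local pattern that is not yet a pseudo-WGS-X configuration, which requires a methodical enumeration of small black/white/swap subdiagrams and matching each against the redex shapes of \Cref{fig:rewrites}.
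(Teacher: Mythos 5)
Your overall architecture matches the paper's: a lexicographic termination measure for Step~\ref{item:pseudo-WGS-X}, a structural characterisation of irreducible diagrams, and simpler counting arguments for Steps~\ref{item:WGS-X} and~\ref{item:rWGS-X}. However, your measure for Step~\ref{item:pseudo-WGS-X} has a concrete flaw. Your third component, the number of non-trivial white binary nodes, does not decrease under the parallel-edge (sum) rule: applied to two parallel \emph{unweighted} edges it produces a single edge of weight $2$, so the white-node count goes from $0$ to $1$ while your first two components are unchanged --- your tuple strictly \emph{increases}. Similarly, removing an unweighted self-loop from a black node leaves all four of your components unchanged. The paper's measure avoids both problems by taking the \emph{number of edges} as a component, which these rules always decrease. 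Relatedly, the paper's primary component is not the number of internal black nodes but the number of internal \emph{fusion-equivalence classes} of black nodes (nodes of degree $\geq 3$ joined by chains of binary nodes with non-zero weights); this is what makes the bookkeeping for rule $(\ast)$ and the fusion rules go through, since $(\ast)$ merges many such classes while possibly increasing the number of high-arity nodes and of edges. If you insist on ``internal black nodes'' as the primary component, you must additionally argue that no rewrite can turn a boundary node into an internal one (for instance by destroying the binary type-1 neighbour that made some node type-0); you assert that the primary component drops but do not address this.

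The second gap is the polynomial-time claim. ``Each component is polynomially bounded in the size of the \emph{starting} diagram'' does not suffice for a lexicographic measure whose lower components can be inflated when a higher one drops: rule $(\ast)$ replaces two adjacent black nodes by a complete bipartite pattern and can create quadratically many new edges, so the lower components are only bounded in terms of the \emph{current} diagram, and a priori the step count could grow super-polynomially. The paper closes this by observing that between two applications of $(\ast)$ only linearly many other rewrites can fire, and that whenever $(\ast)$ fires the diagram has $O(n^2)$ generators (with $n$ the number of black nodes, which does not increase), yielding the polynomial bound. You need an analogous control on intermediate diagram sizes. Your treatment of Steps~\ref{item:WGS-X} and~\ref{item:rWGS-X} and of the characterisation of rule-irreducible diagrams is in line with the paper's.
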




An important operation on WGS-X states that has a simple graphical interpretation is the following:
\begin{lemma}
	\label{lem:vertex-removal}
	For any diagram $D$ in WGS-X form $(s,G,\vec b)$, applying $\tikzfig{bra1}\circ\tikzfig{X}^{\otimes(b_i\oplus1)}$ on the $i$th output can be turned into the WGS-X form $(s,G\setminus\{i\}, \vec b\setminus b_i)$, where $G\setminus\{i\}$ is defined as the graph $G$ from which vertex $i$ is removed (together with all edges linked to $i$ and their weights), and similarly $\vec b\setminus b_i$ is defined as the sequence $\vec b$ from which $i$th element is removed.
\end{lemma}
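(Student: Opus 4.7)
The plan is to derive the statement by a direct diagrammatic rewriting inside the equational theory of \Cref{fig:axioms}. Write $d_i$ for the degree of vertex $i$ in $G$.

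First, I would collapse the $X$-gate stack on the $i$-th output wire. In the WGS-X form, the wire between the W-spider of vertex $i$ and the $i$-th output carries an $X$ coming from $\operatorname{WGS}(G)$ together with an $X^{b_i}$ coming from $X^{\otimes\vec b}$. Precomposing with the operator $\bra{1}X^{b_i\oplus 1}$ and using $X^2=I$, the product $\bra{1}X^{b_i\oplus 1}\cdot X^{b_i}\cdot X$ simplifies to $\bra{1}$ regardless of the value of $b_i$. So it suffices to show that applying $\bra{1}$ directly on the output leg of the W-spider at vertex $i$ transforms $\operatorname{WGS}(G)$ into $\operatorname{WGS}(G\setminus\{i\})$, while leaving all other outputs (and the $X^{\otimes(\vec b\setminus b_i)}$ factor) untouched.

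Next, the core of the argument is two derived identities about the W-spider, both provable from \Cref{fig:axioms} by induction on the arity using the W-spider fusion axiom: (a) plugging $\bra{1}$ into one leg of an $n$-leg W-spider equals $\bra{0}^{\otimes(n-1)}$ on the remaining legs; (b) plugging $\bra{0}$ into one leg of an $n$-leg W-spider equals the $(n-1)$-leg W-spider. Granted (a), the W-spider at vertex $i$ is replaced by $d_i$ copies of $\bra{0}$, one on each leg previously connected to an incident edge. Each such $\bra{0}$ then propagates through the binary white spider carrying the edge weight, since $\bra{0}\cdot\operatorname{diag}(1,w)=\bra{0}$ for any $w$. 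The resulting $\bra{0}$ enters the W-spider at the neighboring vertex and, by identity (b), simply collapses that leg. Iterating over all edges incident to $i$ removes vertex $i$ and all its incident edges while reducing each neighbor's arity accordingly, leaving exactly $\operatorname{WGS}(G\setminus\{i\})$.

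The main technical obstacle is the presence of fermionic swaps along the paths travelled by the propagating $\bra{0}$'s. Since the fermionic swap acts trivially on $\bra{0}\otimes\bra{0}$, $\bra{0}\otimes\bra{1}$, and $\bra{1}\otimes\bra{0}$ and only introduces a sign on $\bra{1}\otimes\bra{1}$, this propagation is sign-free and the scalar $s$ is preserved. One still has to verify that the local applications of (a) and (b) interact cleanly with nearby fermionic swaps, but this is a routine application of \Cref{lem:floop} and \Cref{lem:fswap-invol} together with the fermionic swap axioms.
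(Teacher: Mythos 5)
Your proposal is correct and follows essentially the same route as the paper's proof: cancel the $X$ gates so that $\bra{1}$ lands directly on the W-spider of vertex $i$, convert it into $\bra{0}$ effects on the incident edges, absorb the edge weights, reduce the neighbours' arities, and use the fact that $\bra{0}$ passes through fermionic swaps without introducing signs. The paper carries out exactly these steps diagrammatically (its auxiliary derivations are precisely your identities (a), (b) and the fswap commutation), so there is nothing substantive to add.
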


This allows us to prove the following:

\begin{lemma}
	\label{lem:WGS-X-zero}
	For any diagram $D$ in WGS-X form $(s,G,\vec b)$:
	\[\interp{D}=0\iff s=0\]
\end{lemma}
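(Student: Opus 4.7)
The backward implication $s = 0 \implies \interp{D} = 0$ is immediate from the definition of WGS-X form, since $D$ is literally $s$ times a diagram. The forward direction is the substantive one, and the plan is to exhibit $s$ as a specific coefficient of the state $\interp{D} \in \mathbb{C}^{2^n}$, so that vanishing of the whole vector forces $s = 0$.

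The key observation is that for each $b \in \{0,1\}$ we have $\bra{1} \circ X^{b \oplus 1} = \bra{b}$: when $b = 1$ this collapses to $\bra{1} \circ I = \bra{1}$, and when $b = 0$ it reads $\bra{1} \circ X = \bra{0}$. Applying $\bra{1} \circ X^{\otimes(b_i \oplus 1)}$ to the $i$th output of $D$ for every $i$ therefore computes $\bra{\vec{b}} \cdot \interp{D}$, i.e., the coefficient of $\ket{\vec{b}}$ in the vector $\interp{D}$. On the syntactic side, \Cref{lem:vertex-removal} tells me that each such projection modifies the WGS-X data by deleting one vertex from $G$ together with the corresponding entry of $\vec{b}$, while keeping the scalar $s$ unchanged. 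Iterating this $n$ times reduces the WGS-X form to $(s, \emptyset, \emptyset)$, which by definition is just the scalar diagram $s$, whose interpretation is $s$ itself.

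Matching the two computations of this projected diagram gives $\bra{\vec{b}} \cdot \interp{D} = s$, from which $\interp{D} = 0$ immediately forces $s = 0$. The only real technicality is the iteration of \Cref{lem:vertex-removal}: successive projections produce progressively smaller WGS-X forms, and one should check that the order of vertex removal does not matter, but this is clear because removing vertex $i$ only affects edges incident to $i$ and the $i$th entry of $\vec{b}$. I expect no obstacle beyond this routine induction on $n$.
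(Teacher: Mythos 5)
Your proof is correct and follows essentially the same route as the paper: both apply $\bra{1}^{\otimes n}\circ X^{\otimes(\vec b\oplus 1\cdots 1)}$ to the outputs and invoke \Cref{lem:vertex-removal} repeatedly to reduce to the empty WGS-X form $(s, G_\emptyset, ())$, whose interpretation is the scalar $s$. The only cosmetic difference is that you phrase the conclusion as exhibiting $s$ as the coefficient $\bra{\vec b}\interp{D}$, whereas the paper argues the contrapositive ($s\neq 0\implies\interp{D}\neq 0$); these are the same argument.
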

We may then prove that $0$-diagrams can be put in a very well-defined form:
\begin{lemma}
	\label{lem:WGS-X-zero-equivalence}
	Let $D$ be a WGS-X state such that $\interp{D}=0$. Then $D$ can be put in the WGS-X form $(0, G=([1,n], \emptyset, \_), \vec 0)$, i.e.:
	\[\pW\vdash D = 0\cdot\tikzfig{ket0s}\] 
\end{lemma}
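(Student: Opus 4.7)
The plan is to combine Lemma~\ref{lem:WGS-X-zero} with a systematic use of the scalar axioms, notably \textbf{ax-zero}, to trivialize every structural feature of a WGS-X diagram once its scalar is forced to be zero. Concretely, suppose $D$ is in WGS-X form $(s, G, \vec b)$ with $\interp{D} = 0$. Lemma~\ref{lem:WGS-X-zero} yields $s = 0$, so it is enough to establish that for every ordered weighted graph $G = ([1,n], E, w)$ and every $\vec b \in \{0,1\}^n$,
\[
\pW \vdash 0 \cdot \bigl(X^{\otimes \vec b}\bigr)\circ \operatorname{WGS}(G) \;=\; 0 \cdot \ket{0}^{\otimes n}.
\]

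First, I would establish a small scalar-absorption fact: for any $r \neq 0$ and any diagram $E$, $\pW \vdash 0 \cdot r \cdot E = 0 \cdot E$. This should follow by applying \textbf{ax-phase-distrib} to move $r$ into a white-dot weight, then using \textbf{ax-zero} to collapse the $0 \cdot r$ prefactor back to $0$. With this in hand, we can freely discard edge weights and the $\pm 1$ sign factors coming from fermionic-swap manipulations whenever the global scalar is $0$.

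Next, I would proceed by induction on the number of output qubits $n$. The base case $n = 0$ is immediate: both sides reduce to the empty diagram decorated with the scalar $0$. For the inductive step, I would detach the $n$-th qubit. Using \textbf{ax-zero} read right-to-left on a fresh wire, together with the bialgebra axiom \textbf{ax-W-bialgebra}, \textbf{ax-phase-fusion}, and \textbf{ax-sum} to reorganise the neighbourhood of vertex $n$, and then Lemma~\ref{lem:vertex-removal} to collapse the projected qubit, the diagram $0 \cdot (X^{\otimes \vec b})\circ \operatorname{WGS}(G)$ rewrites to $0 \cdot (D'' \otimes \ket{0})$ where $D''$ is a WGS-X state on $n-1$ qubits. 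The induction hypothesis applied to $D''$ then concludes.

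The main obstacle is the detachment step: the axiom \textbf{ax-zero} targets a very specific local pattern, so to apply it we must first bring the neighbourhood of the last vertex into that pattern regardless of how many edges are incident to vertex $n$ and of the value of $b_n$. I expect this to require a short combinatorial case analysis on the local structure around vertex $n$, each case being a finite derivation in the equational theory that uses the bialgebra and fusion axioms to reduce to a configuration where \textbf{ax-zero} (combined with the scalar-absorption fact above) finishes the job.
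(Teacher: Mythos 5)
Your overall strategy is the right one and is the same in spirit as the paper's: once \Cref{lem:WGS-X-zero} forces $s=0$, the global zero scalar is used to erase every structural feature of the state. The paper does this in a single global pass rather than by induction on $n$: it first shows that under a $0\cdot$ prefactor every edge of the WGS can be deleted (a weight-$r$ edge is turned into a weight-$0$ edge, which is provably no edge at all), so after fusing black nodes the graph part collapses into a tensor of isolated $\ket{0}$'s and $\ket{1}$'s; it then derives $0\cdot\ket{1}=0\cdot\ket{0}$ to finish. Your induction on the number of qubits would organise the same computation perfectly well, and your scalar-absorption observation is harmless and true.

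The genuine problem is the step you lean on for the detachment: \Cref{lem:vertex-removal} cannot do what you ask of it. That lemma describes the effect of \emph{post-composing} the $i$th output with $\bra{1}\circ X^{\otimes(b_i\oplus 1)}$; it is a statement about the composite diagram, not an in-place rewrite of $D$, so ``collapsing the projected qubit'' would establish something about $\bra{1}\circ D$, not about $D$ itself. The detachment of vertex $n$ has to be done instead by deleting, under the $0\cdot$ prefactor, every edge incident to $n$, which leaves that qubit as an isolated $\ket{0}$ or $\ket{1}$ and the rest as a WGS-X state on $n-1$ qubits. Moreover, the two derivations this requires --- that $0\cdot(\text{weight-}r\text{ edge between two W-spiders})$ equals $0\cdot(\text{the same spiders with that leg removed})$, and that $0\cdot\ket{1}=0\cdot\ket{0}$ --- are exactly where all the work in the paper's proof lives, and you defer them to an unspecified ``short combinatorial case analysis.'' As written, the proposal identifies the right target but does not yet contain the argument.
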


We are now able to prove the completeness of the equational theory.
\begin{theorem}
	\label{thm:completeness}
	Let $D_1$ and $D_2$ be two $\pW$-diagrams. Then:
	\[\interp{D_1}=\interp{D_2} \iff \pW\vdash D_1=D_2\]
\end{theorem}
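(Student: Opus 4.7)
We prove the nontrivial direction $\interp{D_1}=\interp{D_2}\Rightarrow \pW\vdash D_1=D_2$; the converse is the soundness statement already established. By the process-state duality of \Cref{prop:process-state-duality}, we may assume $D_1$ and $D_2$ are states on $n$ qubits, since the duality is itself internal to the calculus. Applying the rewrite strategy of \Cref{prop:termination}, each $D_i$ reduces to an rWGS-X form $(s_i,G_i,\vec b_i)$, and by soundness of the derived rewrite rules (\Cref{prop:rewrites-derivable}) the two normal forms share the common semantics. The task thus reduces to showing that any two rWGS-X states with equal semantics are provably equal in $\pW$.

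If the common interpretation is $0$, \Cref{lem:WGS-X-zero} forces $s_1=s_2=0$, and \Cref{lem:WGS-X-zero-equivalence} rewrites both forms to the common canonical shape $0\cdot\ket{0}^{\otimes n}$, so transitivity closes this case. In the nonzero case, the plan is to prove the stronger statement that the triple $(s,G,\vec b)$ of a nonzero rWGS-X state is \emph{uniquely} determined by its semantics; two rWGS-X diagrams built from identical triples are then provably equal using only the flexibility inherent in the definition of $\operatorname{WGS}$ (flexsymmetry together with the fermionic-swap identities of \Cref{lem:floop} and \Cref{lem:fswap-invol} to normalise the layout of crossings).

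Uniqueness proceeds by induction on $n$, the number of outputs. The base $n=0$ is immediate as the state is just the scalar $s$. For the inductive step, one unwinds the WGS semantics into a combinatorial formula
\[
\langle x|D\rangle \;=\; s\cdot\sum_{M\in\mathrm{PM}(G[S(x)])}\operatorname{sgn}(M)\prod_{e\in M}w(e),\qquad S(x)=\{i:x_i=b_i\},
\]
where the signs come from the fermionic swaps. Combined with the vertex-removal \Cref{lem:vertex-removal}, this formula identifies the restriction $(I^{\otimes(n-1)}\otimes\bra{b_n\oplus 1})\interp{D}$ with the semantics of the smaller rWGS-X $(s,G\setminus\{n\},\vec b\setminus b_n)$, to which the induction hypothesis applies. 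Once $b_n$ together with the edges incident to $n$ are recovered from $\interp{D}$, the remaining smaller triple pins down everything else.

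The main obstacle is precisely recovering $b_n$: both restrictions $(I^{\otimes(n-1)}\otimes\bra{0})\interp{D}$ and $(I^{\otimes(n-1)}\otimes\bra{1})\interp{D}$ can individually be the semantics of an rWGS-X state on $n-1$ qubits (a small example: the state $s(\ket{01}+\ket{10})$ has both restrictions equal to single-qubit computational basis states), so $b_n$ cannot be identified by a purely syntactic criterion. Resolving this requires leveraging the reducedness clause---that $b_i=0$ forbids right-neighbours of $i$---which breaks the symmetry between the two restrictions by tying the pattern of X-gates to the vertex ordering; a careful peeling argument from the rightmost vertex, tracking the support of $\interp{D}$ and the signed matching weights at each step, is what I expect to single out $b_n$ and the incidence set at $n$ unambiguously, closing the induction.
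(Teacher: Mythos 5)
Your skeleton coincides with the paper's up to and including the zero case: reduce both diagrams to rWGS-X form via the rewrite strategy (\Cref{prop:termination}, \Cref{prop:rewrites-derivable}), dispatch $\interp{D}=0$ with \Cref{lem:WGS-X-zero} and \Cref{lem:WGS-X-zero-equivalence}, and then argue that a nonzero rWGS-X form is determined by its semantics. The problem is that this last step --- the entire substance of the nonzero case --- is not actually proved in your proposal. You correctly identify the obstruction (the bit $b_n$ cannot be read off from which restrictions of $\interp{D}$ happen to look like rWGS-X semantics), but your resolution is only announced: a ``careful peeling argument from the rightmost vertex'' is what you \emph{expect} to work. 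Worse, the rightmost vertex is exactly where reducedness gives you nothing: the clause ``$b_i=0$ implies $i$ has no right-neighbour'' is vacuous at $i=n$, so both values of $b_n$ are always compatible with reducedness at that end, and the symmetry you worry about is not broken there.

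The paper avoids this entirely by running a \emph{comparison} of the two candidate normal forms rather than an intrinsic reconstruction, and by working from the left. Given rWGS-X forms $(s_1,G_1,\vec b_1)$ and $(s_2,G_2,\vec b_2)$ with equal nonzero semantics, let $i$ be the first index with ${b_1}_i\neq{b_2}_i$, say ${b_1}_i=0$. Applying $\bra{1}$ to the first $i$ outputs removes vertices $1,\dots,i-1$ by \Cref{lem:vertex-removal}; in $d_1$, reducedness forces vertex $i$ to have no right-neighbours, so after these removals it is isolated and its output wire contributes $\bra{1}X\ket{1}=0$, whereas on $d_2$ the same bra performs a legitimate vertex removal and leaves a nonzero rWGS-X state by \Cref{lem:WGS-X-zero} --- a contradiction, so $\vec b_1=\vec b_2$. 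The scalar and then each individual edge weight $w_{ij}$ are subsequently extracted by applying further explicit bras of the form $\bra{1}^{\otimes n}\circ X^{\otimes\vec b'}$; no induction on $n$, no global signed-matching formula for $\langle x|D\rangle$, and no intrinsic characterisation of $b_n$ are needed. If you want to salvage your induction, you should at minimum switch the peeling to the leftmost position where the two candidate forms disagree and prove the dichotomy above; as written, the proposal leaves the crucial step open.
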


This last theorem, together with the fact that the rewriting in rWGS-X form is polynomial (\Cref{prop:termination}) makes the problem of deciding whether two $\pW$-diagrams are semantically equivalent a \textsc{P} problem.

\section{Matchgates}\label{sec:4}

This section aims at characterising exactly the linear maps that $W$-diagrams represent.

\subsection{Matchgate Identities}

Valiant first introduced matchgate identities to characterise $2\to 2$ matchgates, a family of linear maps described in a combinatorial way \cite{DBLP:journals/siamcomp/Valiant02}. In \cite{DBLP:journals/mst/CaiCL09}, the matchgate identities have been extended to characterise matchgates of any size. In the literature, there is a close link between matchgate identities and the Grassman-Plucker identities applied to Pfaffians. It is not the case here, as the diagrammatic technics allow us to directly link matchgate identities to matchings without the intermediate of the Pfaffian. We can fully recover the connection with Pfaffians through the Fetcher-Kasteleyn-Temperley algorithm for counting perfect matchings \cite{kasteleyn1961statistics, doi:10.1080/14786436108243366}, more details on this are outlined in Section \ref{sec:5}.
Many of the proofs of this section are inspired by the very useful clarification of matchgate theory presented in \cite{DBLP:journals/toc/CaiG14}. Notice that contrary to the literature that differentiates between matchgrids, matchcircuits or matchnets, we will only use the term matchgate for any linear map satisfying the matchgate identities.

Recall that for binary words $\alpha\in \{0,1\}^n $ and $\beta\in \{0,1\}^m $, $\alpha \oplus \beta \in \{0,1\}^n $ is the bitwise XOR (if $n=m$), $\alpha\cdot \beta \in \{0,1\}^{n+m}$ the concatenation, $|\alpha|\in \{0,...,n\}$ the Hamming weight, \textit{i.e.}, the number of ones in the word $\alpha$, and $|\alpha|_2 \in \{0,1\}$ the parity of this weight, $0$ if even and $1$ if odd. 

\begin{definition}[Matchgate Identities]
	A tensor $\Gamma \in \mathbb{C}^{2^n}$ satisfies the \textbf{matchgate identities} (MGIs) if for all $\alpha,\beta \in \{0,1\}^n $:
	
	\[\sum_{k=1}^{|\alpha \oplus \beta| } (-1)^k \Gamma_{\alpha \oplus e_{p_k} } \Gamma_{\beta \oplus e_{p_k} }=0 \]

	Where $e_{p_k} \in \{0,1\}^n $ is the binary word which is zero everywhere except in position $p_k $, which is the $k$th position in the set $\{p_1 , ... , p_{|\alpha \oplus \beta |}\} \subseteq \{1,...,n\} $ of positions in which the words $\alpha$ and $\beta$ differs.
\end{definition}

The matchgate identities are not linear, so the set of matchgates is not a subspace of the vector space $\mathbb{C}^{2^n}$ but an algebraic variety \cite{DBLP:journals/mst/CaiCL09}. In general, those identities are not algebraically independent, \textit{i.e.} are not all strictly necessary to describe match-tensors.

Indeed, there are numerous symmetries in those identities. For example, the case $\alpha = \beta $ directly gives empty sums and exchanging $\alpha$ and $\beta$ gives the same identity. Interestingly, one can replace half of the identities with a parity condition.

\begin{proposition}[Parity condition \cite{DBLP:journals/toc/CaiG14}]\label{prop:parity}
	If $\Gamma$ satisfies the matchgate identities then it satisfies the \textbf{parity condition}: for all $\alpha,\beta \in \{0,1\}^n $, $ |\alpha|_2 \neq |\beta |_2 \Rightarrow \Gamma_\alpha \Gamma_\beta =0$.
\end{proposition}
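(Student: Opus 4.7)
The plan is to prove the parity condition by strong induction on the odd Hamming distance $d = |\alpha \oplus \beta|$. The base case $d=1$ will follow directly from a single MGI, and for the inductive step I will carefully choose an auxiliary MGI that isolates the product $\Gamma_\alpha \Gamma_\beta$ while making every other term collapse via the inductive hypothesis.

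For the base case, when $\alpha$ and $\beta$ differ in exactly one position $p_1$, the MGI applied to the pair $(\alpha, \beta)$ reduces to the single term $-\Gamma_{\alpha \oplus e_{p_1}} \Gamma_{\beta \oplus e_{p_1}} = 0$; since $\alpha \oplus e_{p_1} = \beta$ and $\beta \oplus e_{p_1} = \alpha$, this is precisely $\Gamma_\alpha \Gamma_\beta = 0$.

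For the inductive step with $d \geq 3$ odd, I assume the statement holds for all pairs of odd distance strictly less than $d$, and argue by contradiction: suppose $\Gamma_\alpha \neq 0$ and $\Gamma_\beta \neq 0$ with $|\alpha \oplus \beta| = d$, differing in positions $p_1, \ldots, p_d$. The naive choice of applying the MGI to $(\alpha, \beta)$ itself does not work, since every product appearing in that sum still has the same distance $d$ and hence lies outside the reach of the inductive hypothesis. Instead, I would apply the MGI to the shifted pair $(\alpha \oplus e_{p_1}, \beta \oplus e_{p_1})$, whose differing positions are still $p_1, \ldots, p_d$. Its expansion reads
\[-\Gamma_\alpha \Gamma_\beta \;+\; \sum_{k=2}^{d} (-1)^k \, \Gamma_{\alpha \oplus e_{p_1} \oplus e_{p_k}} \, \Gamma_{\beta \oplus e_{p_1} \oplus e_{p_k}} \;=\; 0.\]
For each $k \geq 2$, I would then look at the auxiliary pair $(\alpha \oplus e_{p_1} \oplus e_{p_k}, \beta)$: its first component has parity $|\alpha|_2$ (two bits flipped) which is opposite to $|\beta|_2$, and its Hamming distance to $\beta$ is $d-2$, because both $p_1$ and $p_k$ lie in the support of $\alpha \oplus \beta$ and therefore cancel. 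The inductive hypothesis, applied at the odd distance $d-2 < d$, then forces $\Gamma_{\alpha \oplus e_{p_1} \oplus e_{p_k}} \Gamma_\beta = 0$, and since $\Gamma_\beta \neq 0$ we conclude $\Gamma_{\alpha \oplus e_{p_1} \oplus e_{p_k}} = 0$. Every term of the sum with $k \geq 2$ therefore vanishes, and the identity collapses to $\Gamma_\alpha \Gamma_\beta = 0$, the desired contradiction.

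The only delicate point I anticipate is bookkeeping: one must verify that shifting to $(\alpha \oplus e_{p_1}, \beta \oplus e_{p_1})$ exactly preserves the set of differing positions (so the signs $(-1)^k$ line up correctly), and that when invoking the inductive hypothesis on $(\alpha \oplus e_{p_1} \oplus e_{p_k}, \beta)$ the distance genuinely drops to $d-2$, which crucially uses $p_k \in \mathrm{supp}(\alpha \oplus \beta)$. Beyond this, the argument is a purely algebraic instantiation of the MGI, with no further ingredients required.
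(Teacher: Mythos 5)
Your argument is correct. Note that the paper itself does not prove this proposition---it imports it from the cited reference---so there is no in-paper proof to compare against; your induction on the odd Hamming distance $d=|\alpha\oplus\beta|$, applying the matchgate identity to the shifted pair $(\alpha\oplus e_{p_1},\beta\oplus e_{p_1})$ so that the $k=1$ term isolates $\Gamma_\alpha\Gamma_\beta$ and every $k\ge 2$ term dies by the inductive hypothesis at distance $d-2$, is a sound, self-contained reconstruction of the standard proof, and the two bookkeeping points you flag (preservation of the set of differing positions, and the drop to distance $d-2$) do check out.
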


The parity condition splits match-tensors into two groups, the one with odd parity, such that $|\alpha|$ even implies $\Gamma_\alpha = 0 $, and the one of even parity, such that $|\alpha|$ odd implies $\Gamma_\alpha = 0 $. In particular, the parity condition directly implies that all terms in identities with $|\alpha|_2 \neq |\beta|_2 $ are zero. Notice that the parity condition is not sufficient. We still need matchgate identities in general.

However, the parity condition is sufficient for $n\leq 3$, but not anymore for $n=4$, the original case considered by Valiant \cite{DBLP:journals/siamcomp/Valiant02}. In particular, for $n=0$, the matchgate identities are trivially true; hence they are satisfied by all scalars (processes $0\to 0 $).

\subsection{The Pro of Matchgates}

We will now use the matchgates to define a pro. So far, matchgate identities have been used to characterise vectors seen as tensors, without consideration of inputs and outputs. To apply them to linear maps $f:n\to m $, we will use the state form: $[f]:0\to n+m $ 
described in \Cref{prop:process-state-duality}. 
It allows us to define matchgates.

\begin{definition}[Matchgates]
	A \textbf{matchgate} is a linear map $f:\mathbb{C}^{2^n}\to \mathbb{C}^{2^m}$ such that $[f]$ satisfies the matchgate identities.
\end{definition}

We would like to define a sub-pro of $\textbf{Qubit}$ whose processes are matchgates, however, there are a few properties to check before that. We start by showing stability by the tensor product.

\begin{lemma}
\label{lem:mtensor}
	Given two linear maps $f:a\to b $ and $g:c\to d $ whose state forms $[f] \in \mathbb{C}^{2^{a+b}}$ and $[g] \in \mathbb{C}^{2^{c+d}}$ satisfy the matchgate identities, then $[f\otimes g]\in \mathbb{C}^{2^{a+c+b+d}}$ satisfies the matchgate identities.
\end{lemma}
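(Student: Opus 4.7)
The plan is to verify the matchgate identities for $[f\otimes g]$ by splitting the defining sum into contributions where the flipping position lies in the $f$-part and those where it lies in the $g$-part, then invoking the MGIs of $[f]$ and $[g]$ on each half separately.

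First I would fix an index ordering for $[f\otimes g]$ in which the $a+b$ coordinates coming from $f$ appear before the $c+d$ coordinates coming from $g$, so that
\[[f\otimes g]_{\gamma_f\cdot\gamma_g} \;=\; [f]_{\gamma_f}\,[g]_{\gamma_g} \qquad \text{for } \gamma_f\in\{0,1\}^{a+b},\ \gamma_g\in\{0,1\}^{c+d}.\]
This is the natural outcome of \Cref{prop:process-state-duality} once one builds the state form of $f\otimes g$ by first bending $f$ and $g$ separately into states and then tensoring, rather than bending the outputs of $f\otimes g$ all at once. The two constructions differ only by a reordering of wires — which, in the planar setting, is mediated by fermionic swaps — and to use the clean separated ordering above one needs to check that this reordering is compatible with the MGIs.

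Next, given $\alpha,\beta\in\{0,1\}^{a+b+c+d}$, I would write $\alpha=\alpha_f\cdot\alpha_g$ and $\beta=\beta_f\cdot\beta_g$, and let the ordered list of differing positions be $p_1<\cdots<p_{s+t}$, where the first $s$ lie in the $f$-part and the last $t$ in the $g$-part (a direct consequence of the separated ordering). Since flipping an $f$-position only alters the $f$-factor of $[f\otimes g]$, and similarly for $g$, the MGI sum at $(\alpha,\beta)$ splits as
\begin{align*}
\sum_{k=1}^{s+t}(-1)^k\,[f\otimes g]_{\alpha\oplus e_{p_k}}[f\otimes g]_{\beta\oplus e_{p_k}}
&= [g]_{\alpha_g}[g]_{\beta_g}\sum_{i=1}^{s}(-1)^i\,[f]_{\alpha_f\oplus e'_i}[f]_{\beta_f\oplus e'_i} \\
&\quad+ (-1)^s[f]_{\alpha_f}[f]_{\beta_f}\sum_{j=1}^{t}(-1)^j\,[g]_{\alpha_g\oplus e''_j}[g]_{\beta_g\oplus e''_j},
\end{align*}
where $e'_i$ and $e''_j$ denote the indicators of the $i$-th $f$-differing and $j$-th $g$-differing positions inside the $f$- and $g$-parts respectively. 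The two inner sums are precisely the MGIs for $[f]$ at $(\alpha_f,\beta_f)$ and for $[g]$ at $(\alpha_g,\beta_g)$, and hence both vanish by hypothesis; the factor $(-1)^s$ on the second term is harmless since the whole term is multiplied by zero.

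The main obstacle is the first step: ensuring that a state form of $f\otimes g$ obeying the factorisation $[f\otimes g]_{\gamma_f\gamma_g}=[f]_{\gamma_f}[g]_{\gamma_g}$ in the separated ordering can actually be used to check the MGIs for $f\otimes g$. In the planar fragment, any reordering of outputs introduces fermionic swaps and hence signs $(-1)^{xy}$, and one must verify — presumably using the parity condition of \Cref{prop:parity}, which automatically kills cross-parity terms — that these signs do not spoil the identity under rearrangement of coordinates. Once this coherence between state forms and MGIs is established, the remainder of the argument is the routine index bookkeeping displayed above.
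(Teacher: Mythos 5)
Your index bookkeeping for the separated tensor $T_{\gamma_f\cdot\gamma_g}=[f]_{\gamma_f}[g]_{\gamma_g}$ is correct and pleasantly clean, but the step you defer --- relating $T$ to the actual state form $[f\otimes g]$ --- is not a formality: it is exactly the hard part, and the proposal does not close it. The state form produced by \Cref{prop:process-state-duality} applied to $f\otimes g$ carries the \emph{nested} index ordering $\alpha=\alpha^1\cdot\alpha^2\cdot\alpha^3$ with $\alpha^1\in\{0,1\}^a$, $\alpha^2\in\{0,1\}^{c+d}$, $\alpha^3\in\{0,1\}^b$ and $[f\otimes g]_\alpha=[f]_{\alpha^1\cdot\alpha^3}[g]_{\alpha^2}$: the $f$-coordinates are split around the $g$-coordinates. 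Your separated ordering differs from this by a block permutation of indices, and the matchgate identities are \emph{not} stable under index permutations --- the sign $(-1)^k$ refers to the $k$th differing position in the ambient order, and permuting indices can destroy the property. Concretely, the nested cups $\delta_{xw}\delta_{yz}$ satisfy the MGIs while the crossed cups $\delta_{xz}\delta_{yw}$, obtained by transposing two adjacent indices, violate the identity for $\alpha=1000$, $\beta=0111$; this is the same phenomenon as the paper's remark that the swap is not a matchgate and that contraction must be on \emph{consecutive} indices. So ``once this coherence is established, the rest is routine'' has the emphasis backwards: the routine part is done, and the coherence is the actual lemma.

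The paper sidesteps the issue by never reordering: it splits the MGI sum at $(\alpha,\beta)$ into three pieces according to whether the flipped position lies in $\alpha^1$, $\alpha^2$ or $\alpha^3$. The middle piece is the MGI of $[g]$ and vanishes; the outer two pieces should recombine into the MGI of $[f]$ at $(\alpha^1\cdot\alpha^3,\beta^1\cdot\beta^3)$, but the third piece carries an extra global sign $(-1)^{|\alpha^2\oplus\beta^2|}$ coming from the $g$-positions sitting in between. This is where the parity condition (\Cref{prop:parity}) earns its keep: if $|\alpha^2\oplus\beta^2|$ is odd then $[g]_{\alpha^2}[g]_{\beta^2}=0$ kills both outer pieces, and if it is even the spurious sign is $+1$ and the recombination goes through. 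If you want to keep your two-block decomposition, you must separately prove that the block permutation from the nested to the separated ordering preserves the MGIs (or redo the computation directly on the nested ordering, which lands you on the paper's argument); either way the parity condition, which your proposal only invokes ``presumably'', is indispensable rather than optional.
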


The next thing to check is stability by composition; this follows from the following result:

\begin{lemma}
\label{lem:mcomp}
	If $\Gamma \in \mathbb{C}^{2^{n+2}}$ satisfies the matchgate identities, then the tensor obtained by contracting two consecutive indices satisfies the matchgate identities.
\end{lemma}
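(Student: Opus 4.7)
The plan is to expand the definition of the contracted tensor and split the target identity into four pieces, each controlled by a single MGI of $\Gamma$. For readability I assume the contracted indices are the last two (positions $n+1$ and $n+2$); the general case is identical up to a uniform shift of position labels, since all original indices $\geq j$ are shifted by $+2$ (preserving the parities of the MGI signs) and the two inserted indices always occupy two consecutive ranks in the ordered list of differing positions. The cap $(1,0,0,1)$ then yields $\Gamma'_\alpha = \Gamma_{\alpha\cdot 00} + \Gamma_{\alpha\cdot 11}$ for every $\alpha \in \{0,1\}^n$.

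Fix $\alpha,\beta\in\{0,1\}^n$ and let $p_1<\cdots<p_m$ be their differing positions, where $m=|\alpha\oplus\beta|$. Expanding the MGI we want to establish for $\Gamma'$ gives
\[ T \;:=\; \sum_{k=1}^m (-1)^k\,\Gamma'_{\alpha\oplus e_{p_k}}\,\Gamma'_{\beta\oplus e_{p_k}} \;=\; \sum_{x,y\in\{0,1\}} T_{xy}, \qquad T_{xy} \;:=\; \sum_{k=1}^m (-1)^k\,\Gamma_{(\alpha\oplus e_{p_k})\cdot xx}\,\Gamma_{(\beta\oplus e_{p_k})\cdot yy}. \]
For the diagonal pieces I apply the MGI of $\Gamma$ to the extended pairs $(\alpha\cdot 00,\beta\cdot 00)$ and $(\alpha\cdot 11,\beta\cdot 11)$: in both cases the differing positions are still exactly $p_1<\cdots<p_m$ with the same ranks, so the two identities read precisely $T_{00}=0$ and $T_{11}=0$.

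For the off-diagonal pieces the two inserted positions now enter the difference, and the MGIs produce extra ``spurious'' terms. Applying it to $(\alpha\cdot 00,\beta\cdot 11)$, whose differing positions are $p_1<\cdots<p_m<n+1<n+2$, yields
\[ T_{01} + (-1)^{m+1}\,\Gamma_{\alpha\cdot 10}\,\Gamma_{\beta\cdot 01} + (-1)^{m+2}\,\Gamma_{\alpha\cdot 01}\,\Gamma_{\beta\cdot 10} \;=\; 0, \]
while $(\alpha\cdot 11,\beta\cdot 00)$ yields
\[ T_{10} + (-1)^{m+1}\,\Gamma_{\alpha\cdot 01}\,\Gamma_{\beta\cdot 10} + (-1)^{m+2}\,\Gamma_{\alpha\cdot 10}\,\Gamma_{\beta\cdot 01} \;=\; 0. \]
Summing these two identities, the four spurious terms pair up with opposite signs and cancel because $(-1)^{m+1}+(-1)^{m+2}=0$, leaving $T_{01}+T_{10}=0$. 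Combining all four pieces gives $T=0$, which is exactly the MGI for $\Gamma'$.

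The one delicate point, which I expect to be the main obstacle, is the sign bookkeeping: one must verify that the spurious terms produced by flipping the two inserted positions really do cancel across the two off-diagonal extensions. This cancellation is forced by the fact that the inserted indices are consecutive, so their respective MGI signs in either extension are necessarily opposite. The same mechanism carries over verbatim to a contraction at an arbitrary middle pair $(j,j+1)$, so the lemma follows in full generality; everything else is a direct expansion.
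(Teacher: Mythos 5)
Your proof is correct and follows essentially the same route as the paper's: expand $\Gamma'_{\gamma}=\Gamma_{\gamma[00]}+\Gamma_{\gamma[11]}$, split the identity into four pieces, kill the diagonal pieces with the MGIs for $(\alpha[00],\beta[00])$ and $(\alpha[11],\beta[11])$, and observe that the four spurious terms arising from the two off-diagonal MGIs cancel in pairs precisely because the two contracted positions occupy consecutive ranks among the differing positions. The only cosmetic difference is that the paper treats an arbitrary consecutive pair $(i,i+1)$ directly (with $\ell$ the rank of $i$ among the differing positions) rather than reducing to the last two positions, but your justification of that reduction is sound.
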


Notice that the consecutive indices assumption is essential here. Without it, we could easily construct the swap gate that does not satisfy the matchgate identities. To be able to contract consecutive indices is enough to show the stability by composition. The idea is to iterate contraction on consecutive indices until we obtain enough cups to use the snake equation, pictorially:
\def\fig{compcontr}
\begin{align*}
	{\setlength{\fboxsep}{0pt}\colorbox{gray!15}{~\strut}}
	\eq{}{\setlength{\fboxsep}{0pt}\colorbox{gray!15}{~\strut}}
	\eq{}{\setlength{\fboxsep}{0pt}\colorbox{gray!15}{~\strut\input{./figures/\fig/\fig_02.tikz}}}
\end{align*}

Now that we have stability by tensor and composition, it only remains to show the identities are matchgates. $id_0 $ is a scalar, so directly a matchgate. The state-form of $id_1 $ is the cap which is a matchgate as it satisfies the parity condition (sufficient for $n=2$). The fact that all $id_n $ are matchgates follows from stability by the tensor product. We can now state the main theorem of this subsection.

\begin{theorem}[\textbf{Match}]
	The matchgates form a pro $\textbf{Match}$, which is a sub-pro of $\textbf{Qubit}$.
\end{theorem}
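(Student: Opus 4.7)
The goal is to verify that the subclass $\textbf{Match} \subseteq \textbf{Qubit}$ satisfies the four requirements of a sub-pro: (i) each identity $id_n$ is a matchgate; (ii) matchgates are closed under composition; (iii) matchgates are closed under tensor product; and (iv) the associativity, unit and interchange laws hold, which is automatic since they are inherited from the enveloping pro $\textbf{Qubit}$. Items (i) and (iii) are essentially already in hand: stability under $\otimes$ is exactly \cref{lem:mtensor}; $id_0$ is a scalar, hence trivially a matchgate; and $id_1$ has state form equal to the cap, which satisfies the parity condition (sufficient for $n=2$ by the remark following \cref{prop:parity}). Iterating \cref{lem:mtensor} then yields $id_n = id_1^{\otimes n} \in \textbf{Match}$ for every $n$.

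The only nontrivial point is closure under composition, and the plan is precisely the one suggested by the three-panel diagrammatic derivation preceding the statement. Given matchgates $f\colon a \to b$ and $g\colon b \to c$, I would first form $f \otimes g$, whose state $[f \otimes g]$ is a match-tensor by \cref{lem:mtensor}, and whose $a+2b+c$ indices are ordered as: the $a$ inputs of $f$, the $b$ outputs of $f$, the $b$ inputs of $g$, the $c$ outputs of $g$. By the snake equation, $[g \circ f]$ is obtained from $[f \otimes g]$ by contracting each output of $f$ with the matching input of $g$ via a cup. Because of the index order above, the very last output of $f$ is adjacent to the very first input of $g$, so \cref{lem:mcomp} applies to this pair and the resulting tensor still satisfies the matchgate identities. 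After this single contraction, the new pair of indices to be identified (namely the penultimate output of $f$ and the next input of $g$) is again consecutive, so \cref{lem:mcomp} applies once more. Iterating this $b$ times produces $[g \circ f]$ and shows it is a match-tensor; hence $g \circ f \in \textbf{Match}$.

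The main potential obstacle is a bookkeeping one: one must check that the contracted indices really are consecutive at every stage of the induction, since \cref{lem:mcomp} is explicitly restricted to this case (indeed, the excerpt stresses that without the consecutiveness assumption one could synthesise the swap, which is not a matchgate). This is exactly what the planar placement of the cups in the displayed derivation guarantees, and the argument reduces to observing that contracting a pair of adjacent wires does not change the relative order of the remaining ones. Once this is verified, all four sub-pro axioms are established, and the theorem follows.
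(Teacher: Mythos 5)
Your proposal is correct and follows the paper's argument essentially verbatim: identities are handled via the parity condition plus tensor-stability, tensor-stability is Lemma~\ref{lem:mtensor}, and closure under composition comes from the snake equation together with iterated application of Lemma~\ref{lem:mcomp} to the nested sequence of consecutive contractions. The only bookkeeping caveat is that the index ordering you ascribe to $[f\otimes g]$ is really that of $[f]\otimes[g]$ (the paper's Lemma~\ref{lem:mtensor} yields the nested ordering $[f\otimes g]_\alpha=[f]_{\alpha^1\cdot\alpha^3}[g]_{\alpha^2}$), but since $[f]\otimes[g]$ is itself a match-tensor by the same lemma applied to the two states viewed as maps with zero inputs, your contraction scheme goes through unchanged.
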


Notice that $\textbf{Match}$ is compact closed since the cup and the cap are both matchgates. Hence we can also use process/state duality in $\textbf{Match}$ without any worry. As expected, all $W$-diagrams represent matchgates.

\begin{lemma}\label{lem:fact}
	The functor $\interp{\_}: \pW\to \textbf{Qubit} $ factorises through $\textbf{Match}$, \textit{i.e.}, the interpretations of diagrams in $W$ are matchgates.
	\[
		\begin{tikzpicture}
	\begin{pgfonlayer}{nodelayer}
		\node [style=box-no-outline] (153) at (3, -0.75) {$\textbf{Match}$};
		\node [style=box-no-outline] (154) at (0.5, 0.5) {$~\pW~$};
		\node [style=box-no-outline] (155) at (3, 0.5) {$\textbf{Qubit}$};
		\node [style=none] (162) at (1.75, 0.75) {$\interp{\_}$};
	\end{pgfonlayer}
	\begin{pgfonlayer}{edgelayer}
		\draw [style={arrows={->[]}}] (154) to (155);
		\draw [style=dashed arrow] (154) to (153);
		\draw [style=hook arrow] (153) to (155);
	\end{pgfonlayer}
\end{tikzpicture}
	\]

\end{lemma}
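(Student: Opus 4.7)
The plan is to exploit the pro structure. Since the preceding theorem establishes $\textbf{Match}$ as a sub-pro of $\textbf{Qubit}$ closed under composition and tensor product, and since $\interp{\_}:\pW\to\textbf{Qubit}$ is already a pro morphism, it suffices to verify that the image under $\interp{\_}$ of each generator of $\pW$ lies in $\textbf{Match}$. The factorising functor is then obtained automatically by corestricting $\interp{\_}$, and the diagram commutes by construction.

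The generators to check are: the scalar $(1)$, the identity wire, the cup, the cap, the binary white spider (for arbitrary $r\in\C$), and the black spider of arbitrary arity $n\to m$. The fermionic swap need not be treated separately, since the decomposition given in \Cref{sec:2} expresses it from these generators, and the class of matchgates is closed under composition and tensor. The low-arity cases are immediate: $(1)$ is trivially a matchgate since the MGIs are vacuous when $n=0$; the state forms of the identity, cup, cap, and binary white spider all live on $n=2$ legs, where by \Cref{prop:parity} (and the remark that the parity condition is sufficient for $n\leq 3$) it is enough to check parity, and in each case the nonzero entries all sit on bitstrings of the same Hamming parity.

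The substantive verification concerns the black spider. Its state form on $n+m$ legs is the tensor $\Gamma$ with $\Gamma_\gamma=1$ if $|\gamma|=1$ and $\Gamma_\gamma=0$ otherwise. I would verify the MGIs by a direct case analysis on $\alpha,\beta\in\{0,1\}^{n+m}$: a term $\Gamma_{\alpha\oplus e_{p_k}}\Gamma_{\beta\oplus e_{p_k}}$ contributes only when both flipped words have Hamming weight exactly $1$, which already forces $|\alpha|,|\beta|\in\{0,2\}$. If either weight is odd, every term vanishes. If $|\alpha|=0$ and $|\beta|=2$ (or symmetrically), the two positions of $\alpha\oplus\beta$ give exactly two surviving terms with signs $-1$ and $+1$, which cancel. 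Finally, if $|\alpha|=|\beta|=2$, one checks that flipping any bit in $\alpha\oplus\beta$ sends either $\alpha$ or $\beta$ to a word of weight $\geq 3$, so every term vanishes.

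The main obstacle is this case analysis for the black spider: while each subcase is elementary, tracking the positions $p_k$ inside $\alpha\oplus\beta$ and the alternating signs requires some bookkeeping. Everything else in the proof is essentially a packaging of already-proved closure properties of $\textbf{Match}$.
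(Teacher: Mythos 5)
Your proposal is correct, and its skeleton — reduce to the generators via the closure of $\textbf{Match}$ under composition and tensor, then check each generator — is exactly the paper's. The two arguments part ways only on the black spider. The paper first decomposes the $n$-ary spider into binary and ternary ones (using the spider-fusion axiom), so that every remaining generator has a state form on at most $3$ legs, and then invokes the fact that the parity condition suffices for $n\leq 3$; the whole verification collapses to a parity check. You instead verify the matchgate identities directly for the $n$-ary spider's state form $\Gamma_\gamma=[\,|\gamma|=1\,]$, and your case analysis is sound: nonzero terms force $|\alpha|,|\beta|\in\{0,2\}$; the case $|\alpha|=0$, $|\beta|=2$ yields exactly two terms of opposite sign; and when $|\alpha|=|\beta|=2$ every position of $\alpha\oplus\beta$ sends one of the two words to weight $3$. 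What the paper's route buys is brevity — everything becomes a parity check. What your route buys is self-containedness for the hardest generator: you do not need the decomposition of the $n$-ary spider into ternary ones, nor the (stated but unproved in the paper) sufficiency of parity for $3$-leg tensors — though note you still invoke parity-sufficiency for the $2$-leg generators, which is harmless since for $n=2$ the only nontrivial matchgate identities are literally instances of the parity condition. Both proofs also agree in treating the fermionic swap as derived rather than primitive.
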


\begin{proof}
	We have to prove that the interpretation of any $\pW$ diagram is a matchgate. To do so, as matchgates are stable by composition and tensor product we only have to check that the interpretations of the generators are matchgates. The state forms of the generators have at most three outputs ($n$-ary spiders can be decomposed into binary and ternary spiders), so it is sufficient to check the parity condition, which is indeed satisfied by the interpretations of the generators.
\end{proof}

\subsection{Universality}

Now that we proved that all $\pW$-diagrams represent matchgates, it remains to show that all matchgates can be represented by a $\pW$ diagram, in other words, that $\pW$ is universal for $\textbf{Match}$. This will require a few additional properties of matchgates, adapting some results of \cite{DBLP:journals/toc/CaiG14}. 

\begin{lemma}\label{lem:weight2}
	If $\Gamma$ satisfies the matchgate identities and $\Gamma_{\mathbf{0}} =1 $, where $\mathbf{0}$ is binary word full of $0$, then it is uniquely determined by its coefficients $\Gamma_\alpha$ where $|\alpha | = 2 $.
\end{lemma}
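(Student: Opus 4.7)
The plan is a strong induction on the Hamming weight $|\alpha|$, using the matchgate identities to express each coefficient in terms of weight-$2$ coefficients and coefficients of strictly smaller weight.

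First, I invoke the parity condition (\Cref{prop:parity}): since $\Gamma_{\mathbf 0} = 1 \neq 0$ has even Hamming weight, the condition forces $\Gamma_\alpha = 0$ whenever $|\alpha|$ is odd. So only the even-weight coefficients are at stake; the base cases $|\alpha| = 0$ and $|\alpha| = 2$ are given by normalisation and by hypothesis, and it suffices to treat even weights $|\alpha| = 2\ell$ with $\ell \geq 2$ in the inductive step.

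For such an $\alpha$, let $p_1 < \cdots < p_{2\ell}$ enumerate its support, pick one distinguished index $i_0 = p_{k_0}$ in that support, and apply the matchgate identity to the pair
\[(\alpha', \beta') \;:=\; (e_{i_0},\ \alpha \oplus e_{i_0}),\]
whose componentwise XOR equals $\alpha$, so the MGI sum ranges exactly over $p_1, \ldots, p_{2\ell}$. The term at $k = k_0$ evaluates to $(-1)^{k_0}\,\Gamma_{\mathbf 0}\,\Gamma_\alpha = (-1)^{k_0}\,\Gamma_\alpha$ because $e_{i_0} \oplus e_{p_{k_0}} = \mathbf 0$ and $(\alpha \oplus e_{i_0}) \oplus e_{p_{k_0}} = \alpha$. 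Every other term is a product in which the first factor $\Gamma_{e_{i_0} \oplus e_{p_k}}$ is a weight-$2$ coefficient, and the second factor $\Gamma_{\alpha \oplus e_{i_0} \oplus e_{p_k}}$ has weight $2\ell - 2$ (both $i_0$ and $p_k$ lie in the support of $\alpha$, so XORing them cancels two ones). Rearranging yields
\[\Gamma_\alpha \;=\; (-1)^{k_0+1} \sum_{\substack{1 \leq k \leq 2\ell \\ k \neq k_0}} (-1)^k\, \Gamma_{e_{i_0} \oplus e_{p_k}}\, \Gamma_{\alpha \oplus e_{i_0} \oplus e_{p_k}},\]
which by the induction hypothesis is fully determined by the weight-$2$ data.

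The only real obstacle is the bookkeeping: one has to check carefully that the distinguished term genuinely isolates $\Gamma_\alpha$ with coefficient $\pm 1$ (crucially using $\Gamma_{\mathbf 0} = 1$), and that every surviving term pairs a weight-$2$ factor with a strictly lower-weight factor so that the recursion really does decrease $|\alpha|$. As a sanity check, this recursion is nothing but the Pfaffian expansion along row $i_0$ of the antisymmetric matrix $A_{ij} := \Gamma_{e_i \oplus e_j}$, giving as a by-product the closed form $\Gamma_\alpha = \mathrm{Pf}\bigl(A|_{\{p_1,\ldots,p_{2\ell}\}}\bigr)$.
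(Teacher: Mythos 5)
Your proof is correct and follows essentially the same route as the paper: both apply the matchgate identity to the pair $(e_{i},\,\alpha\oplus e_{i})$ for an index $i$ in the support of $\alpha$, isolate the term $\pm\Gamma_{\mathbf 0}\Gamma_\alpha$, and observe that every remaining term pairs a weight-$2$ coefficient with one of weight $|\alpha|-2$, giving the downward recursion. The only differences are cosmetic — the paper fixes $i$ to be the \emph{first} support position and runs the induction over all $|\alpha|\geq 3$ rather than dispatching odd weights via the parity condition first — and your Pfaffian remark is a correct (if unneeded) bonus.
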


\begin{proof}
	If $|\alpha |=0$ then we already know that $\Gamma_\alpha = 1$ and the parity condition implies that $\Gamma_\alpha = 0 $ if $|\alpha|=1 $. We show that for all $\alpha$ with $ 3\leq|\alpha|$, we can express $\Gamma_\alpha $ from coefficients $\Gamma_\beta $ where all $\beta$s have strictly smaller Hamming weights. Let $i$ be the first position where $\alpha $ and $\mathbf{0}$ differ, the matchgate identity corresponding to $\alpha \oplus e_i $ and $\mathbf{0} \oplus e_i $ is:
	\[\sum_{k=1}^{|\alpha | } (-1)^k \Gamma_{\alpha \oplus e_i \oplus e_{p_k } } \Gamma_{e_i \oplus e_{p_k }}=0\]
	Here the $p_k $ are exactly the position where $\alpha$ is $1$, in particular $i=p_1$ so:
	\[\Gamma_\alpha  = \Gamma_\alpha \Gamma_\mathbf{0} = \sum_{k=2}^{|\alpha | } (-1)^k \Gamma_{\alpha \oplus e_i \oplus e_{p_k } } \Gamma_{e_i \oplus e_{p_k }}\]
	For $k\geq 2$, We have $|e_i \oplus e_{p_k } | = 2 $ and $|\alpha \oplus e_i \oplus e_{p_k }| = |\alpha|-2 $ so $\Gamma_\alpha $ is completely determined by coefficients corresponding to strictly smaller Hamming weight. It follows that all $\Gamma_\alpha $ can be expressed from the $\Gamma_\beta $s with $|\beta|=2$.
	
\end{proof}

We will now be able to reuse the normal form from Section \ref{sec:3} to construct diagrams representing any matchgate.

\begin{lemma}[Universality]
	$\pW$ is universal for $\textbf{Match}$.
\end{lemma}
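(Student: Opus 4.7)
The plan is to use process/state duality (\Cref{prop:process-state-duality}) to reduce universality to the statement that every match-tensor $\Gamma:0\to n$ is the interpretation of some $\pW$-diagram, and then to exhibit such a diagram in WGS-X normal form (Section~\ref{sec:3}) whose edge weights encode the weight-$2$ coefficients of $\Gamma$. Verification will be by \Cref{lem:weight2}, which forces two match-tensors normalized to $\Gamma_{\mathbf 0}=1$ to coincide as soon as they agree on Hamming-weight-$2$ bitstrings.

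First I would normalize. The zero tensor is realized by the WGS-X state with scalar $0$ (cf.\ \Cref{lem:WGS-X-zero-equivalence}). For $\Gamma\neq 0$, pick any $\alpha\in\{0,1\}^n$ with $\Gamma_\alpha\neq 0$ and set $\Gamma'_\beta := \Gamma_{\beta\oplus\alpha}/\Gamma_\alpha$. Since $X$ is itself a matchgate (its state form satisfies the parity condition, sufficient at arity $2$) and matchgates are stable under tensor and composition (\Cref{lem:mtensor} and \Cref{lem:mcomp}), $\Gamma'$ is still a match-tensor; by construction $\Gamma'_{\mathbf 0}=1$, and by the parity condition (\Cref{prop:parity}) $\Gamma'$ is supported on even Hamming weights. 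Realizing $\Gamma'$ as a $\pW$-diagram then suffices, since $\Gamma$ is recovered by post-composing with $X^{\otimes\alpha}$ and tensoring with the scalar $\Gamma_\alpha$, both expressible in $\pW$.

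Next, construct the candidate. Let $D_0$ be the WGS-X diagram $(1,G,\vec 0)$, where $G=([1,n],E,w)$ has edge set $E=\{(i,j):i<j,\ \Gamma'_{e_i+e_j}\neq 0\}$ and weights $w(i,j):=\Gamma'_{e_i+e_j}$. By \Cref{lem:fact}, $\interp{D_0}$ is a match-tensor. Since $\vec b=\vec 0$, projecting the $k$-th output of $D_0$ onto $\ket 0$ coincides with the pattern $\bra 1 \circ X^{b_k\oplus 1}$ of \Cref{lem:vertex-removal}, hence deletes vertex $k$ from the underlying graph. Projecting every output onto $\ket 0$ therefore collapses $D_0$ to the empty WGS-X state, of value $1$, giving $\interp{D_0}_{\mathbf 0}=1$. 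For the coefficient at $e_i+e_j$, delete every vertex except $i,j$; a direct computation on the two-vertex remainder returns the state $\ket{00}+w(i,j)\ket{11}$, so $\interp{D_0}_{e_i+e_j}=w(i,j)=\Gamma'_{e_i+e_j}$. Then \Cref{lem:weight2} forces $\interp{D_0}=\Gamma'$, and undoing the normalization concludes.

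The main obstacle is the coefficient computation on $D_0$: a direct attack would need to track the signs introduced by the fermionic swaps inside the WGS layout, but \Cref{lem:vertex-removal} is engineered precisely to encapsulate this subtlety, so the work reduces to the transparent two-vertex case. Once this is secured, the uniqueness half of \Cref{lem:weight2} handles all higher-weight coefficients at once, and only routine bookkeeping remains to reassemble $\Gamma$ from $\Gamma'$.
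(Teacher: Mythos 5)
Your proposal is correct and follows essentially the same route as the paper's proof: reduce to states via process/state duality, normalize to $\Gamma_{\mathbf 0}=1$ by shifting with $X$ gates at a nonzero coefficient and rescaling, realize the weight-$2$ coefficients as edge weights of a graph state, and conclude by the uniqueness statement of \Cref{lem:weight2}. The only differences are presentational (you merge the paper's three normalization cases into one and spell out the coefficient verification via \Cref{lem:vertex-removal} where the paper says ``by construction'').
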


\begin{proof}
	Relying on process/state duality, we only consider states $0\to n$. Given $\Gamma$ satisfying the matchgate identities, we will construct a $W$ diagram $D$ such that $\interp{D}=\Gamma $. We start by considering the case where $\Gamma_{\mathbf{0}} = 1$. Then we construct a weighted graph $G$ on $n$ vertices setting the weight of the edge $(i,j)$ to $\Gamma_{e_i \oplus e_j }$. We then take $D$ to be the diagram in graph form corresponding to $G$. By construction we then have $\interp{D}_{\mathbf{0}}= 1 $ and $\interp{D}_{ e_i \oplus e_j } = \Gamma_{e_i \oplus e_j} $ for all $i\neq j$. Furthermore, by Lemma \ref{lem:fact}, $\interp{D}$ is a matchgate so by Lemma \ref{lem:weight2}, $\interp{D}=\Gamma $.
	
	Now if $\Gamma_{\mathbf{0}}\neq 1 $: First if $\Gamma_{\mathbf{0}}\neq 0 $ then $ \Gamma' = \frac{1}{\Gamma_{\mathbf{0}}} \Gamma $ is of the right form so we can obtain $D$ by adding a floating edge of weight $\Gamma_{\mathbf{0}}$ to the diagram $D'$ representing $\Gamma' $. The last case is $\Gamma_{\mathbf{0}}= 0 $, then if $\Gamma=0 $ we can represent $\Gamma$ by any diagram and a floating black node, else let $\beta$ be such that $\Gamma_\beta \neq 0 $, then $\Gamma'$ defined as $ \Gamma'_\alpha = \Gamma_{\alpha \oplus \beta }$ satisfies $\Gamma'_\mathbf{0} \neq 0 $ and there is a diagram $D'$ representing $\Gamma' $. A diagram $D$ representing $\Gamma$ is then obtained by plugging binary black nodes to the outputs of $D'$ corresponding to the positions where $\beta $ is $1$.
\end{proof}

Notice that since $\textbf{Match}$ is a sub-pro of $\textbf{Qubit}$, the completeness proof of Section \ref{sec:3} still holds in $\textbf{Match}$. It provides us with a universal and complete graphical language for matchgates.

\begin{theorem}
	$\pW$ is universal and complete for $\textbf{Match}$.
\end{theorem}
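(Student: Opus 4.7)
The plan is to assemble three results already available in the paper. First, Lemma~\ref{lem:fact} ensures that the interpretation functor $\interp{\_}:\pW\to\cat{Qubit}$ factorises as $\pW\to\textbf{Match}\hookrightarrow\cat{Qubit}$, so every $\pW$-diagram indeed denotes a matchgate and the induced functor $\pW\to\textbf{Match}$ is well-defined. This is the prerequisite that lets us even phrase universality and completeness relative to $\textbf{Match}$.

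Universality is then exactly the content of the preceding Universality Lemma: every matchgate $\Gamma$ is of the form $\interp{D}$ for some $\pW$-diagram $D$, so the functor $\pW\to\textbf{Match}$ is surjective on hom-sets. Nothing more needs to be said here.

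For completeness, I would argue as follows. Take two $\pW$-diagrams $D_1,D_2$ such that $\interp{D_1}=\interp{D_2}$ as matchgates. Because $\textbf{Match}$ is a sub-pro of $\cat{Qubit}$, the inclusion $\textbf{Match}\hookrightarrow\cat{Qubit}$ is faithful, and hence $\interp{D_1}=\interp{D_2}$ also holds as an equality of linear maps in $\cat{Qubit}$. Theorem~\ref{thm:completeness}, which is completeness of the equational theory with respect to the $\cat{Qubit}$ semantics, then yields $\pW\vdash D_1=D_2$. This is precisely completeness of $\pW$ relative to $\textbf{Match}$.

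I do not expect any genuine obstacle: this theorem is a one-line corollary built from the Universality Lemma, Lemma~\ref{lem:fact} and Theorem~\ref{thm:completeness}. The only conceptual point worth stressing is that upgrading completeness from the ambient pro $\cat{Qubit}$ to the sub-pro $\textbf{Match}$ is automatic, because faithfulness of the inclusion means that two diagrams which agree on matchgate semantics are a fortiori equal on the larger $\cat{Qubit}$ semantics, and the existing completeness proof already handles the latter.
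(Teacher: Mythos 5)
Your proposal is correct and follows exactly the paper's own reasoning: the theorem is stated there as an immediate consequence of the Universality Lemma together with the observation that, since $\textbf{Match}$ is a sub-pro of $\cat{Qubit}$, the completeness result of Theorem~\ref{thm:completeness} transfers directly. Your additional remarks on Lemma~\ref{lem:fact} and the faithfulness of the inclusion simply make explicit what the paper leaves implicit.
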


\section{Further Work}\label{sec:5}

The proper definition and axiomatisation of the $\pW$-calculus pave the way to diverse investigations of the connection between combinatorics and quantum computing. We briefly outline in this last section some very promising directions that are the subjects of ongoing research. 

\subsection{New Simulation Techniques for Quantum Circuits}

The identification of a fragment of the ZX-calculus exactly corresponding to the efficiently simulable Clifford gate \cite{DBLP:journals/lmcs/BackensPW20} allows to design new rewrite-based simulation technics for quantum circuits introduced in \cite{DBLP:conf/tqc/KissingerWV22}. Those algorithms have a parametrised complexity which is polynomial in the number of Clifford gated but exponential in the number of $T$-gates (a gate outside of the Clifford fragment sufficient to reach approximate universality).

Similarly, we have identified an efficiently simulable fragment of ZW-calculus: the $\pW$-calculus exactly corresponding to matchgates. Adding the swap gate to $\pW$ we obtain another fragment of ZW which is exactly the fermionic ZW-calculus introduced in \cite{DBLP:journals/lmcs/FeliceHN19}. This calculus is universal for \textbf{Qubit} modulo an encoding trick: the dual-rail encoding. Equivalently, LFM is ZW where white nodes are contrived to have even arities, so adding arity one white nodes (corresponding to preparing $\ket{+}$ states) is enough to recover the full ZW-calculus, which is universal for $\textbf{Qubits}$. This situation suggests the possibility of designing rewrite-based simulation algorithms with complexities parametrised by the number of swap gates and/or $\ket{+}$ preparation. It would lead to a brand new kind of quantum simulation technics exploiting the combinatorial structure of matchgate and directly connected to classical perfect matching counting algorithms.

\subsection{Combinatorial Interpretation of Full ZW-Calculus}

In Section \ref{sec:2}, we provided a combinatorial interpretation of $\pW$-diagrams \textit{via} perfect matchings in planar graphs. This combinatorial approach directly extends to LFM-calculus \textit{via} perfect matchings in arbitrary graphs (which is \textsc{\#P}-complete). Furthermore, we can also extend the interpretation to the full ZW-calculus, where white nodes can have arbitrary arities. To do so, we must consider hypergraph matchings, \textit{i.e.}, subsets of hyperedges covering each vertex exactly once. The arbitrary arity white nodes here play the role of hyperedges, and the black nodes, the role of vertices. Thus, the interpretation of ZW-scalars is the number of hypergraph matchings of the hypergraph underlying the diagram. Notice that hypergraph matching is also presented as the set cover problem in the literature. The full ZW-calculus could offer new perspectives on set cover in the same way that $\pW$ does for perfect matchings. In particular, some reduction results appear to have very clear diagrammatical proofs.

Aside from perfect matchings, it seems that graphical languages can encode other counting problems on graphs or hypergraphs. Designing such languages could shed a new tensorial/diagrammatical light on the corresponding combinatorial problems. Those approaches are reminiscent of the recent ZH-based algorithm for \textsc{\#Sat}, introduced in \cite{DBLP:journals/corr/abs-2212-08048} and related works linking graphical languages and counting complexity \cite{DBLP:journals/corr/abs-2004-06455,DBLP:conf/icalp/BeaudrapKW22}. Conversely, the question of applying similar combinatorial interpretations to other graphical languages as ZX-calculus \cite{DBLP:conf/icalp/CoeckeD08}, or ZH-calculus \cite{backens2019zh} is also worth being investigated. 

\subsection{Towards a Diagrammatic Approach of Perfect Matching Counting}

In Section \ref{sec:2}, we used the Fletcher-Kasteleyn-Temperley algorithm as a black box to compute the interpretation of $\pW$-scalars in polynomial time. However, it seems possible to achieve the same result with purely diagrammatical technics. In fact, applying the rewriting strategy described in Section \ref{sec:3} to a scalar reduces it to a normal form from which we can directly read the interpretation. It seems very probable that this requires only a polynomial number of rewrites.

This provides a way to count perfect matchings without referring to Pfaffian computation, and conversely, it gives a new algorithm to compute Pfaffians based on rewriting.

The FKT algorithm only applies to a specific class of graphs, called Pfaffian graphs, \textbf{i.e.}, the graphs admitting a Pfaffian orientation. In particular, all planar graphs are Pafaffian \cite{kasteleyn1967graph}. It seems that Pfaffian orientiation are directly connected to the behavior of fermionic swap and their lack of naturality which introduces $-1$ weights in the edges. More generally, all graphs not containing $K_{3,3}$ are Pfaffian \cite{DBLP:conf/swat/Vazirani88,little1974extension} (we recall that planar graphs are precisely the graphs not containing neither $K_{3,3}$ nor $K_5 $ as minors). Moreover, there also exists a polynomial time algorithm for $K_5$-minor-free graphs \cite{DBLP:conf/coco/StraubTW14} based on graph decomposition. There is a large amount of work in perspective, re-expressing in diagrammatic terms those different variations and understanding adequately how our rewriting rules could encode the minor constraints.

Formalising and implementing those different algorithms is the object of ongoing work. The main difficulty is to identify the suitable data structures to manipulate the topological data of a given diagram, equivalently, the specific planar embedding of the corresponding graph.

\bibliography{black}

\newpage

\appendix

\begin{proof}[Proof of Lemmas \ref{lem:floop} and \ref{lem:fswap-invol}]
\def\fig{floop-prf}
\begin{align*}
{\setlength{\fboxsep}{0pt}\colorbox{gray!15}{~\strut}}
\eq{}{\setlength{\fboxsep}{0pt}\colorbox{gray!15}{~\strut}}
\eq{}{\setlength{\fboxsep}{0pt}\colorbox{gray!15}{~\strut\input{./figures/\fig/\fig_02.tikz}}}
\eq{}{\setlength{\fboxsep}{0pt}\colorbox{gray!15}{~\strut\input{./figures/\fig/\fig_03.tikz}}}
\eq{}{\setlength{\fboxsep}{0pt}\colorbox{gray!15}{~\strut\input{./figures/\fig/\fig_04.tikz}}}
\eq{}{\setlength{\fboxsep}{0pt}\colorbox{gray!15}{~\strut\input{./figures/\fig/\fig_05.tikz}}}
\end{align*}
\def\fig{fswap-invol-prf}
\begin{align*}
{\setlength{\fboxsep}{0pt}\colorbox{gray!15}{~\strut}}
\eq{}{\setlength{\fboxsep}{0pt}\colorbox{gray!15}{~\strut}}
\eq{}{\setlength{\fboxsep}{0pt}\colorbox{gray!15}{~\strut\input{./figures/\fig/\fig_02.tikz}}}
\eq{}{\setlength{\fboxsep}{0pt}\colorbox{gray!15}{~\strut\input{./figures/\fig/\fig_03.tikz}}}
\eq{}{\setlength{\fboxsep}{0pt}\colorbox{gray!15}{~\strut\input{./figures/\fig/\fig_04.tikz}}}
\eq{}{\setlength{\fboxsep}{0pt}\colorbox{gray!15}{~\strut\input{./figures/\fig/\fig_05.tikz}}}
\end{align*}
\end{proof}

\begin{proof}[Proof of \Cref{prop:rewrites-derivable}]
\def\fig{0-edge-prf}
\begin{align*}
{\setlength{\fboxsep}{0pt}\colorbox{gray!15}{~\strut}}
\eq{}{\setlength{\fboxsep}{0pt}\colorbox{gray!15}{~\strut}}
\eq{}{\setlength{\fboxsep}{0pt}\colorbox{gray!15}{~\strut\input{./figures/\fig/\fig_02.tikz}}}
\eq{}{\setlength{\fboxsep}{0pt}\colorbox{gray!15}{~\strut\input{./figures/\fig/\fig_03.tikz}}}
\eq{}{\setlength{\fboxsep}{0pt}\colorbox{gray!15}{~\strut\input{./figures/\fig/\fig_04.tikz}}}
\eq{}{\setlength{\fboxsep}{0pt}\colorbox{gray!15}{~\strut\input{./figures/\fig/\fig_05.tikz}}}
\eq{}{\setlength{\fboxsep}{0pt}\colorbox{gray!15}{~\strut\input{./figures/\fig/\fig_06.tikz}}}
\eq{}{\setlength{\fboxsep}{0pt}\colorbox{gray!15}{~\strut\input{./figures/\fig/\fig_07.tikz}}}
\end{align*}

\def\fig{rewrite-pivot}
\begin{align*}
{\setlength{\fboxsep}{0pt}\colorbox{gray!15}{~\strut}}
\eq{}r\cdot{\setlength{\fboxsep}{0pt}\colorbox{gray!15}{~\strut}}
\eq{}r\cdot{\setlength{\fboxsep}{0pt}\colorbox{gray!15}{~\strut\input{./figures/\fig/\fig_02.tikz}}}
\eq{}r\cdot{\setlength{\fboxsep}{0pt}\colorbox{gray!15}{~\strut\input{./figures/\fig/\fig_03.tikz}}}
\end{align*}

\def\fig{rewrite-fusion}
\begin{align*}
{\setlength{\fboxsep}{0pt}\colorbox{gray!15}{~\strut}}
\eq{}rs\cdot{\setlength{\fboxsep}{0pt}\colorbox{gray!15}{~\strut}}
\eq{}rs\cdot{\setlength{\fboxsep}{0pt}\colorbox{gray!15}{~\strut\input{./figures/\fig/\fig_02.tikz}}}
\eq{}{\setlength{\fboxsep}{0pt}\colorbox{gray!15}{~\strut\input{./figures/\fig/\fig_03.tikz}}}
\end{align*}

\def\fig{rewrite-fusion-0}
\begin{align*}
{\setlength{\fboxsep}{0pt}\colorbox{gray!15}{~\strut}}
\eq{}{\setlength{\fboxsep}{0pt}\colorbox{gray!15}{~\strut}}
\eq{}{\setlength{\fboxsep}{0pt}\colorbox{gray!15}{~\strut\input{./figures/\fig/\fig_02.tikz}}}
\eq{}{\setlength{\fboxsep}{0pt}\colorbox{gray!15}{~\strut\input{./figures/\fig/\fig_03.tikz}}}
\end{align*}

\def\fig{rewrite-W-loop}
\begin{align*}
{\setlength{\fboxsep}{0pt}\colorbox{gray!15}{~\strut}}
\eq{}{\setlength{\fboxsep}{0pt}\colorbox{gray!15}{~\strut}}
\eq{}r^{\frac12}\cdot{\setlength{\fboxsep}{0pt}\colorbox{gray!15}{~\strut\input{./figures/\fig/\fig_02.tikz}}}
\eq{}r^{\frac12}\cdot{\setlength{\fboxsep}{0pt}\colorbox{gray!15}{~\strut\input{./figures/\fig/\fig_03.tikz}}}
\eq{}r^{\frac12}\cdot{\setlength{\fboxsep}{0pt}\colorbox{gray!15}{~\strut\input{./figures/\fig/\fig_04.tikz}}}
\eq{}{\setlength{\fboxsep}{0pt}\colorbox{gray!15}{~\strut\input{./figures/\fig/\fig_05.tikz}}}
\end{align*}

\def\fig{rewrite-binary-W}
\begin{align*}
{\setlength{\fboxsep}{0pt}\colorbox{gray!15}{~\strut}}
\eq{}r\cdot{\setlength{\fboxsep}{0pt}\colorbox{gray!15}{~\strut}}
\eq{}r\cdot{\setlength{\fboxsep}{0pt}\colorbox{gray!15}{~\strut\input{./figures/\fig/\fig_02.tikz}}}
\end{align*}
\end{proof}

\begin{proof}[Proof of \cref{prop:termination}]
We take each big step of the rewrite strategy, and show that each one terminates into the appropriate form.
\begin{enumerate}
\item We are going to define for every diagram a quantity, as a tuple, whose lexicographic order will be strictly reduced by any of the rewrite step. The very first quantity needed requires some focus.

We say that, whenever the following situation occurs (with $r_i\neq0$ and $s_i\neq0$), the two extremal black nodes are \emph{fusion-equivalent} if they have degree $\geq3$:
\[\tikzfig{equiv-class}\]
We define an equivalence relation $\sim_f$ between black nodes by taking the reflexive, transitive closure of this relation. We say that an equivalence class is internal if \emph{all} its constituents are internal nodes.

For a diagram $D$, we define $T(D)\in\mathbb N^6$ as the quantity:
\[ \small T(D):=\left(\#\left(
\begin{array}{c}
\text{internal}\\
\sim_f\text{-classes}
\end{array}
\right), \#\left(\tikzfig{W-spider},n\geq 3\right), \#\left(\tikzfig{edge}\right), \#\left(\tikzfig{X}\right), \#\left(\tikzfig{fswap}\right), \#\left(\def\fig{rewrite-zero}{\setlength{\fboxsep}{0pt}\colorbox{gray!15}{~\strut}}\right)\right)\]
We can show that if $D\to D'$ then $T(D')<T(D)$ in the lexicographic order:
\begin{itemize}
\item $\def\fig{ax-sum}{\setlength{\fboxsep}{0pt}\colorbox{gray!15}{~\strut}}\to{\setlength{\fboxsep}{0pt}\colorbox{gray!15}{~\strut}}$ may reduce may reduce the first two values (if one of the black nodes initially has degree 3), but it at least reduces the number of edges.
\item $\def\fig{rewrite-W-loop}{\setlength{\fboxsep}{0pt}\colorbox{gray!15}{~\strut}}\to{\setlength{\fboxsep}{0pt}\colorbox{gray!15}{~\strut\input{./figures/\fig/\fig_05.tikz}}}$ similarly may reduce the first two values, but at least the third one (the number of edges).
\item $\def\fig{0-edge-prf}{\setlength{\fboxsep}{0pt}\colorbox{gray!15}{~\strut}}\to{\setlength{\fboxsep}{0pt}\colorbox{gray!15}{~\strut\input{./figures/\fig/\fig_07.tikz}}}$ can only reduce the third value (and does not change the others).
\item $\def\fig{rewrite-fswap-removal}{\setlength{\fboxsep}{0pt}\colorbox{gray!15}{~\strut}}\to{\setlength{\fboxsep}{0pt}\colorbox{gray!15}{~\strut}}$ and 
$\def\fig{rewrite-floop}{\setlength{\fboxsep}{0pt}\colorbox{gray!15}{~\strut}}\to{\setlength{\fboxsep}{0pt}\colorbox{gray!15}{~\strut}}$ both reduce the fifth value without changing the other ones
\item $\def\fig{ax-phase-fusion}{\setlength{\fboxsep}{0pt}\colorbox{gray!15}{~\strut}}\to{\setlength{\fboxsep}{0pt}\colorbox{gray!15}{~\strut}}$ can only reduce the third value.
\item $\def\fig{rewrite-binary-W}{\setlength{\fboxsep}{0pt}\colorbox{gray!15}{~\strut}}\underset{r\neq0}\to r\cdot{\setlength{\fboxsep}{0pt}\colorbox{gray!15}{~\strut\input{./figures/\fig/\fig_02.tikz}}}$ only changes the number of binary black nodes (in particular, it cannot create new $\sim_f$-classes).
\item $\def\fig{rewrite-fusion-0}{\setlength{\fboxsep}{0pt}\colorbox{gray!15}{~\strut}}\to r\cdot{\setlength{\fboxsep}{0pt}\colorbox{gray!15}{~\strut\input{./figures/\fig/\fig_03.tikz}}}$ may decrease the two first values, but necessarily decreases the third one.
\item $\def\fig{rewrite-fusion}{\setlength{\fboxsep}{0pt}\colorbox{gray!15}{~\strut}}\overset{\substack{n\geq2\\m\geq2}}{\underset{\substack{r\neq0\\s\neq0}}\to}{\setlength{\fboxsep}{0pt}\colorbox{gray!15}{~\strut\input{./figures/\fig/\fig_03.tikz}}}$ : since $n\geq2$ and $m\geq 2$, the two extremal black nodes are necessarily in the same $\sim_f$-class. In that case, the number of such classes does not change, but the second value is decreased.
Notice that, when applied repeatedly, this rule eventually reduces every class to a single element.
\item $\def\fig{rewrite-zero}{\setlength{\fboxsep}{0pt}\colorbox{gray!15}{~\strut}}\to{\setlength{\fboxsep}{0pt}\colorbox{gray!15}{~\strut}}$ reduces the last value.
\item $\def\fig{rewrite-pivot}{\setlength{\fboxsep}{0pt}\colorbox{gray!15}{~\strut}}\underset{\substack{n\neq1\\m\neq1}}\to r\cdot{\setlength{\fboxsep}{0pt}\colorbox{gray!15}{~\strut\input{./figures/\fig/\fig_03.tikz}}}$ requires a case distinction. If $n=m=0$, we reduce the number of edges. If $n=0$ and $m\geq2$ if the bottom  black node is internal, we decrease the first value, if not, the second value (similarly, the case $n\geq2$ and $m=0$ decreases $T$). In the case where $n\geq2$ and $m\geq2$, let us focus on one of the two black nodes. Since the rule can only be applied when none other can, all its neighbours each constitute their own $\sim_f$-class. After the rewrite, all the top non-binary black nodes will join the $\sim_f$-class of the neighbour they connect to, so the number of \emph{internal} $\sim_f$-classes is either unchanged if the node was not internal, or reduced by one if it was. Since at least one of the two nodes had to be internal for the rewrite to apply, the overall number of $\sim_f$-classes reduces.
\end{itemize}
At the end of these rewrites, the diagram does not have any internal node left (all were removed by Rule $(\ast)$ or fused into boundary nodes). The only possible form of a diagram with no internal nodes is the pseudo-WGS-X form.
Notice that in between applications of $(\ast)$, which decreases the number of internal $\sim_f$-classes, there can only be a linear (in the size of the diagram) number of rewrites that can be applied. When $(\ast)$ is applied, there is no more than $O(n^2)$ generators ($n$ being the number of black nodes). This forces the step to stop after a polynomial amount of time.
\item Consider a boundary node that is not in the proper form for the diagram to be in WGS-X form. In all generality, the node's neighbourhood is in the form (up to rearranging of the output wires):
\def\fig{pseudo-to-WGS-X}
\begin{align*}
{\setlength{\fboxsep}{0pt}\colorbox{gray!15}{~\strut}}
\end{align*}
We then apply two black nodes on the $m$ rightmost outputs:
\begin{align*}
{\setlength{\fboxsep}{0pt}\colorbox{gray!15}{~\strut}}
\to^*{\setlength{\fboxsep}{0pt}\colorbox{gray!15}{~\strut}}
\eq{}{\setlength{\fboxsep}{0pt}\colorbox{gray!15}{~\strut\input{./figures/\fig/\fig_02.tikz}}}
\end{align*}
Doing this for all ``improper'' nodes allows us to associate exactly one type-1 boundary node to each output. At this point, we can move all $\tikzfig{fswap}$ up into the graph part of the diagram. This changes the order of the vertices, and potentially adds $-1$ weights on some edges, potentially on output wires. The weights on output wires are precisely handled by the last substep:
\begin{align*}
{\setlength{\fboxsep}{0pt}\colorbox{gray!15}{~\strut\input{./figures/\fig/\fig_02.tikz}}}
\to^*c_0b_0...b_n\cdot{\setlength{\fboxsep}{0pt}\colorbox{gray!15}{~\strut\input{./figures/\fig/\fig_03.tikz}}}
\end{align*}
When all weights are moved up, the diagram ends up in WGS-X form. This step obviously stops after a polynomial amount of time.
\item The rule $(\ast)$ together with that of \Cref{fig:rewrite-reduce} cover all situations when a type-0 boundary node has a right neighbour (notice that boundary nodes of arity $1$ cannot have neighbours). In the process of this step, we are rewriting the WGS-X state in an equivalent one, with fewer type-0 nodes with a right neighbour. The step hence eventually terminates, and in such a situation that none of the type-0 boundary nodes have a right neighbour. This is exactly the form of a reduced WGS-X state. For the same reason as in step 1, this step terminates in polynomial time.
\end{enumerate}
Finally, you may notice that the rewrite strategy terminates in polynomial time, as each of the three steps is polynomial.
\end{proof}

\begin{proof}[Proof of \Cref{lem:vertex-removal}]
W.l.o.g.~we can assume $i=1$. We then get:
\def\fig{WGS-X-vertex-removal}
\begin{align*}
{\setlength{\fboxsep}{0pt}\colorbox{gray!15}{~\strut}}
\eq{}{\setlength{\fboxsep}{0pt}\colorbox{gray!15}{~\strut}}
\eq{}{\setlength{\fboxsep}{0pt}\colorbox{gray!15}{~\strut\input{./figures/\fig/\fig_02.tikz}}}
\eq{}{\setlength{\fboxsep}{0pt}\colorbox{gray!15}{~\strut\input{./figures/\fig/\fig_03.tikz}}}
\end{align*}
where dotted lines represent potential edges (potentially with weights), that cross with fermionic swaps $\tikzfig{fswap}$; 
and where we used the following steps:
\def\fig{completeness-details}
\begin{align*}
{\setlength{\fboxsep}{0pt}\colorbox{gray!15}{~\strut}}
\eq{}{\setlength{\fboxsep}{0pt}\colorbox{gray!15}{~\strut}}
\eq{}{\setlength{\fboxsep}{0pt}\colorbox{gray!15}{~\strut\input{./figures/\fig/\fig_02.tikz}}}
\eq{}{\setlength{\fboxsep}{0pt}\colorbox{gray!15}{~\strut\input{./figures/\fig/\fig_03.tikz}}}
\end{align*}
together with the fact that fermionic swaps do not interfere here as:
\def\fig{ket0-through-fswap}
\begin{align*}
{\setlength{\fboxsep}{0pt}\colorbox{gray!15}{~\strut}}
\eq{}{\setlength{\fboxsep}{0pt}\colorbox{gray!15}{~\strut}}
\end{align*}
\end{proof}

\begin{proof}[Proof of \Cref{lem:WGS-X-zero}]
The right to left implication is obvious. Now suppose $s\neq0$. Then, applying $\tikzfig{bra1}^{\otimes n}\circ \tikzfig{X}^{\otimes(\vec b_1\oplus1...1)}$ to the outputs reduces to the WGS-X state $(s, G_\emptyset, () )$ thanks to \Cref{lem:vertex-removal} (where $G_\emptyset = ([],\emptyset,\_)$ is the empty graph), whose interpretation is scalar $s$. Since $s\neq0$, $\interp{D}\neq0$.
\end{proof}

\begin{proof}[Proof of \Cref{lem:WGS-X-zero-equivalence}]
We can start by removing all edges from the WGS-X state thanks to:
\def\fig{zero-deriv}
\begin{align*}
0\cdot{\setlength{\fboxsep}{0pt}\colorbox{gray!15}{~\strut}}
\eq{}0\cdot{\setlength{\fboxsep}{0pt}\colorbox{gray!15}{~\strut}}
\eq{}{\setlength{\fboxsep}{0pt}\colorbox{gray!15}{~\strut\input{./figures/\fig/\fig_02.tikz}}}
\eq{}{\setlength{\fboxsep}{0pt}\colorbox{gray!15}{~\strut\input{./figures/\fig/\fig_03.tikz}}}
\eq{}0\cdot{\setlength{\fboxsep}{0pt}\colorbox{gray!15}{~\strut\input{./figures/\fig/\fig_04.tikz}}}
\eq{}0\cdot{\setlength{\fboxsep}{0pt}\colorbox{gray!15}{~\strut\input{./figures/\fig/\fig_05.tikz}}}
\end{align*}
and then fusing black nodes together when possible (again, the fermionic swaps do not cause issues here). When done, we end up with a tensor of $\tikzfig{ket0}$s and $\tikzfig{ket1}$s. It remains to show the following:
\def\fig{zero-ket}
\begin{align*}
0\cdot{\setlength{\fboxsep}{0pt}\colorbox{gray!15}{~\strut}}
\eq{}{\setlength{\fboxsep}{0pt}\colorbox{gray!15}{~\strut}}
\eq{}{\setlength{\fboxsep}{0pt}\colorbox{gray!15}{~\strut\input{./figures/\fig/\fig_02.tikz}}}
\eq{}0\cdot{\setlength{\fboxsep}{0pt}\colorbox{gray!15}{~\strut\input{./figures/\fig/\fig_03.tikz}}}
\eq{}0\cdot{\setlength{\fboxsep}{0pt}\colorbox{gray!15}{~\strut\input{./figures/\fig/\fig_04.tikz}}}
\eq{}0\times 0\cdot{\setlength{\fboxsep}{0pt}\colorbox{gray!15}{~\strut\input{./figures/\fig/\fig_05.tikz}}}
\eq{}0\cdot{\setlength{\fboxsep}{0pt}\colorbox{gray!15}{~\strut\input{./figures/\fig/\fig_06.tikz}}}
\end{align*}
\end{proof}

\begin{proof}[Proof of \Cref{thm:completeness}]
The right to left implication is soundness of the rules, which is obvious as the equational theory is a subpart of the equational theory for ZW-diagrams, which is known to be sound \cite{DBLP:conf/lics/Hadzihasanovic15}.
Let us now assume that $\interp{D_1} = \interp{D_2}$. Using \Cref{lem:rewrite-to-WGS-X} and \Cref{prop:rewrites-derivable}, we can turn both $D_1$ and $D_2$ into respectively $d_1$ and $d_2$, in rWGS-X form, and in a way that preserves semantics, i.e.~$\interp{d_1}=\interp{d_2}$. Let us denote $(s_i, G_i=([1,n], E_i, w_i), \vec b_i)$ the rWGS-X form of $d_i$.

Notice that if $\interp{d_1}=\interp{d_2}=0$, \Cref{lem:WGS-X-zero-equivalence} ensures that $d_1$ and $d_2$ can both be put in the same form, which proves the result in that case. In the following, we can hence consider that the diagrams have non-$0$ interpretation.

First, let us show that $\vec b_1=\vec b_2$. Let us consider the first index $i$ for which ${b_1}_i\neq {b_2}_i$. As both $d_1$ and $d_2$ are in reduced form, for all $j<i$, ${b_1}_j = {b_2}_j = 1$. Suppose w.l.o.g.~that ${b_1}_i = 0$ and ${b_2}_i = 1$. Again, since they are in reduced form, the $i$th vertex in $G_1$ can only have neigbours on its left. Let us apply $\tikzfig{bra1}$ to the first $i$ qubits in $d_1$ and $d_2$. On the one hand:
\def\fig{completeness-d1}
\begin{align*}
{\setlength{\fboxsep}{0pt}\colorbox{gray!15}{~\strut}}
&\eq{}s_1\cdot{\setlength{\fboxsep}{0pt}\colorbox{gray!15}{~\strut}}
\eq{}s_1\cdot{\setlength{\fboxsep}{0pt}\colorbox{gray!15}{~\strut\input{./figures/\fig/\fig_03.tikz}}}
\eq{} 0
\end{align*}
where we used \Cref{lem:vertex-removal} for the first $i-1$ qubits and the black node fusion for the $i$th.
The map hence obtained from $d_1$ is the zero map. On the other hand, using \Cref{lem:vertex-removal} again:
\def\fig{completeness-d2}
\begin{align*}
{\setlength{\fboxsep}{0pt}\colorbox{gray!15}{~\strut}}
&\eq{}s_2\cdot{\setlength{\fboxsep}{0pt}\colorbox{gray!15}{~\strut}}
\eq{}s_2\cdot{\setlength{\fboxsep}{0pt}\colorbox{gray!15}{~\strut\input{./figures/\fig/\fig_04.tikz}}}
\end{align*}
which thanks to \Cref{lem:WGS-X-zero} is necessarily not zero. There is a contradiction, hence, if $d_1$ and $d_2$ are in rWGS-X form with $\interp{d_1} = \interp{d_2}$, then $\vec b_1 = \vec b_2$.

We can then show that we can recover $s_1=s_2$. Indeed, if we apply $\tikzfig{bra1}^{\otimes n}\circ \tikzfig{X}^{\vec b_1\oplus1...1}$ on both diagrams, we get (\Cref{lem:vertex-removal}):
\def\fig{completeness-getting-scalar}
\begin{align*}
{\setlength{\fboxsep}{0pt}\colorbox{gray!15}{~\strut}}
\eq{}s_1\cdot{\setlength{\fboxsep}{0pt}\colorbox{gray!15}{~\strut}}
\eq{}s_1\cdot{\setlength{\fboxsep}{0pt}\colorbox{gray!15}{~\strut\input{./figures/\fig/\fig_02.tikz}}}
\end{align*}
Similarly on $d_2$, we get $s_2\cdot{\setlength{\fboxsep}{0pt}\colorbox{gray!15}{~\strut\input{./figures/\fig/\fig_02.tikz}}}$, which proves $s_1=s_2$.

Finally, we can show that the weight between every pair $(i,j)$ of vertices in $G_1$ and $G_2$ is the same, with the convention that having no edge between two vertices is equivalent to having an edge with weight $0$:
\def\fig{0-edge-is-no-edge}
\begin{align*}
{\setlength{\fboxsep}{0pt}\colorbox{gray!15}{~\strut}}
\eq{}{\setlength{\fboxsep}{0pt}\colorbox{gray!15}{~\strut}}
\eq{}{\setlength{\fboxsep}{0pt}\colorbox{gray!15}{~\strut\input{./figures/\fig/\fig_02.tikz}}}
\end{align*}
To that end, consider $\vec b' := \vec b_1\oplus1..1\overset{{\color{gray}i}}01...1\overset{{\color{gray}j}}01...1$. We may now apply $\tikzfig{bra1}^{\otimes n}\circ \tikzfig{X}^{\vec b'}$ on both diagrams, to get on the one hand:
\def\fig{completeness-edge-ij}
\begin{align*}
{\setlength{\fboxsep}{0pt}\colorbox{gray!15}{~\strut}}
\eq{}s_1\cdot{\setlength{\fboxsep}{0pt}\colorbox{gray!15}{~\strut}}
\eq{}s_1\cdot{\setlength{\fboxsep}{0pt}\colorbox{gray!15}{~\strut\input{./figures/\fig/\fig_02.tikz}}}
\eq{}s_1 w_{ij}^{(1)}\cdot{\setlength{\fboxsep}{0pt}\colorbox{gray!15}{~\strut\input{./figures/\fig/\fig_03.tikz}}}
\end{align*}
and similary on the other hand, we get $s_1 w_{ij}^{(2)}$ from $d_2$. This implies $w_{ij}^{(1)}=w_{ij}^{(2)}$. Doing so for all pairs of vertices in $G_i$ gives us $G_1=G_2$.
\end{proof}

\begin{proof}[Proof of \Cref{lem:mtensor}]
	Given $\alpha \in \{0,1\}^{a+c+d+b} $ we write $\alpha= \alpha^1 \cdot \alpha^2 \cdot \alpha^3 $ with $\alpha^1 \in \{0,1\}^{a}$, $\alpha^2 \in \{0,1\}^{c+d}$ and $\alpha^3  \in \{0,1\}^{b} $. We have $[f\otimes g]_\alpha= [f]_{\alpha^1 \cdot \alpha^3 } [g]_{\alpha^2}$. Pictorially:
	\def\fig{tenssateform}
	\begin{align*}
		{\setlength{\fboxsep}{0pt}\colorbox{gray!15}{~\strut}}
		\eq{}{\setlength{\fboxsep}{0pt}\colorbox{gray!15}{~\strut}}
	\end{align*}
	
	We compute the matchgate identity corresponding to $\alpha, \beta \in \{0,1\}^{a+c+d+b}$.
	
	\[\sum_{k=1 }^{|\alpha \oplus \beta |} (-1)^k [f\otimes g]_{\alpha \oplus e_{p_k}} [f\otimes g]_{\beta \oplus e_{p_k} }\]
	
	We start by splitting the sum into three terms, one for $a$, one for $c+d$ and one for $b$. By doing so, we also re-index the $ p_k $ for each sum:
	
	\begin{align*}
		&[g]_{\alpha^2}[g]_{\beta^2}\left(\sum_{k=1 }^{ |\alpha^1 \oplus \beta^1 | } (-1)^k [f]_{(\alpha^1 \oplus e_{p_k} ) \cdot \alpha^3 } [f]_{(\beta^1 \oplus e_{p_k} ) \cdot \beta^3 }\right)\\
		+& (-1)^{|\alpha^1 \oplus \beta^1 |}[f]_{\alpha^1 \cdot \alpha^3 } [f]_{\beta^1 \cdot \beta^3 }\left(\sum_{k=1 }^{ |\alpha^2 \oplus \beta^2 | } (-1)^k [g]_{\alpha^2 \oplus _{p_k} }[g]_{\beta^2 \oplus _{p_k} }\right)\\
		+& (-1)^{|\alpha^1 \oplus \beta^1 | + |\alpha^2 \oplus \beta^2 | }[g]_{\alpha^2}[g]_{\beta^2}\left(\sum_{k=1}^{|\alpha^3 \oplus \beta^3|} (-1)^k [f]_{\alpha^1  \cdot (\alpha^3 \oplus e_{p_k} )} [f]_{\beta^1 \cdot (\beta^3 \oplus e_{p_k} ) }\right)
	\end{align*}
	
	The second term is zero as we recognise the matchgate identity satisfied by $[g]$ for $\alpha^2 $ and $\beta^2 $. Furthermore if $|\alpha^2 \oplus \beta^2 |_2 \neq 0 $ the parity condition on $[g]$ implies that $[g]_{\alpha^2}[g]_{\beta^2}=0$ canceling the first and third terms. In the case where $|\alpha^2 \oplus \beta^2 |_2 = 0 $, we can gather the first and third terms to get: 
	
	\[ [g]_{\alpha^2}[g]_{\beta^2}\left(\sum_{k=1}^{|(\alpha^1 \cdot \alpha^3) \oplus (\beta^1 \cdot \beta^3 ) |} (-1)^k [f]_{(\alpha^1 \cdot \alpha^3) \oplus e_{p_k } } [f]_{(\beta^1 \cdot \beta^3) \oplus e_{p_k } }\right) \]
	
	We now recognise the matchgate identity satisfied by $[f]$ for $\alpha^1 \cdot \alpha^3$ and $\beta^1 \cdot \beta^3$, which evaluates to zero. So $[f\otimes g] $ satisfies the matchgate identities.
	
\end{proof}

\begin{proof}[Proof of \Cref{lem:mcomp}]
	
	Let $i,i+1 \in \{1,...,n+2\}$ be two consecutive indices. We write $\alpha[xy] \in \{0,1\}^{n+2}$ for the binary word $\alpha \in \{0,1\}^n $ in which $x,y \in \{0,1\}$ have been respectively inserted in position $i$ and $i+1$. Denoting $\Gamma'$ the tensor obtained by contracting the indices $i$ and $i+1$ in $\Gamma$ we have: $\Gamma'_{\alpha} = \Gamma_{\alpha[00]} + \Gamma_{\alpha[11]}$.
	
	We now compute a matchgate identity: 
	
	\begin{align*}& \sum_{k=1}^{|\alpha \oplus \beta|} (-1)^k \Gamma'_{\alpha \oplus e_{p_k }} \Gamma'_{\beta \oplus e_{p_k } }= \\
          &\sum_{k=1}^{|\alpha \oplus \beta|}  (-1)^k \left(\Gamma_{(\alpha\oplus e_{p_k })[00] } + \Gamma_{(\alpha\oplus e_{p_k })[11] } \right) \left(\Gamma_{(\beta\oplus e_{p_k })[00] } + \Gamma_{(\beta\oplus e_{p_k })[11] } \right) \end{align*}
	
	Distributing and using $(\alpha \oplus \beta) [xy]= \alpha[xy] \oplus \beta[00] $ gives us four terms:
	
	\begin{align*}
		&\sum_{k=1}^{|\alpha \oplus \beta|}  (-1)^k \Gamma_{\alpha[00] \oplus e_{p_k }[00]}\Gamma_{\beta[00]\oplus e_{p_k } [00] }\\
		+& \sum_{k=1}^{|\alpha \oplus \beta|} (-1)^k \Gamma_{\alpha[00] \oplus e_{p_k }[00]}\Gamma_{\beta[11]\oplus e_{p_k } [00] } \\
		+& \sum_{k=1}^{|\alpha \oplus \beta|} (-1)^k \Gamma_{\alpha[11] \oplus e_{p_k }[00]}\Gamma_{\beta[00]\oplus e_{p_k } [00] } \\
		+&  \sum_{k=1}^{|\alpha \oplus \beta|} (-1)^k \Gamma_{\alpha[11] \oplus e_{p_k }[00]}\Gamma_{\beta[11]\oplus e_{p_k } [00] }
	\end{align*}
	
	The first and last terms correspond to matchgate identities respectively for $\alpha[00]$ and $\beta[00]$, and for $\alpha[11]$ and $\beta[11]$, so they are zero as $\Gamma$ is required to satisfy the matchgate identities. We would like to say the same about the second and third terms. Sadly some terms are missing to get complete matchgate identities since we have added new positions where the words differ. The missing terms are $(-1)^{\ell} \Gamma_{\alpha[10] }\Gamma_{\beta[01]} $ and $(-1)^{\ell +1} \Gamma_{\alpha[01] }\Gamma_{\beta[10]} $ for the second line, and $(-1)^{\ell} \Gamma_{\alpha[01] }\Gamma_{\beta[10]} $ and $(-1)^{\ell +1} \Gamma_{\alpha[10] }\Gamma_{\beta[01]} $ for the third line, where $\ell$ is the index of the position $i$ in the set of differing positions. We see that the four missing terms altogether cancel each other. So we can safely add the missing terms in the sum and get two complete matchgate identities, respectively, for $\alpha[00]$ and $\beta[11]$, and for $\alpha[11]$ and $\beta[00]$. Finally, the global sum is zero, and $\Gamma'$ satisfies the matchgate identities. 
	
\end{proof}

\end{document}